\newcommand{\R}{\mathbb{R}}
\newcommand{\p}{\mathbb{P}}
\newcommand{\E}{\mathbb{E}}
\newcommand{\1}{\mathbf{1}}
\newcommand{\g}{\, | \,}
\newcommand{\x}{\mathbf{x}}
\newcommand{\X}{\mathbf{X}}
\newcommand{\y}{\mathbf{y}}
\newcommand{\A}{\mathbf{A}}
\newtheorem{thm}{Theorem}[section]
\newtheorem{lem}[thm]{Lemma}
\begin{document}

\title{Bayesian nonparametric multivariate convex regression}

\author{Lauren Hannah, David Dunson}

\maketitle

\begin{abstract}
In many applications, such as economics, operations research and reinforcement learning, one often needs to estimate a multivariate regression function $f$ subject to a convexity constraint.  For example, in sequential decision processes the value of a state under optimal subsequent decisions may be known to be convex or concave. We propose a new Bayesian nonparametric multivariate approach based on characterizing the unknown regression function as the max of a random collection of unknown hyperplanes.  This specification induces a prior with large support in a Kullback-Leibler sense on the space of convex functions, while also leading to strong posterior consistency. Although we assume that $f$ is defined over $\R^p$, we show that this model has a convergence rate of $\log(n)^{-1} n^{-1/(d+2)}$ under the empirical $L_2$ norm when $f$ actually maps a $d$ dimensional linear subspace to $\R$. We design an efficient reversible jump MCMC algorithm for posterior computation and demonstrate the methods through application to value function approximation.
\end{abstract}


\section{Introduction}
Consider the problem of estimating the function $f$ for the model
$$y = f(\x) + \epsilon,$$
where $\x \in \mathcal{X} \subset \R^p$, $y \in \R$, $f:\R^p \rightarrow \R$ is a mean regression function and $\epsilon\sim N(0,\sigma^2).$ Given the observations $(\x_1,y_1),\dots,(\x_n,y_n)$, we would like to estimate $f$ subject to the convexity constraint, 
\begin{equation}\label{eq:convexity}
f(\x_1) \geq f(\x_2) + \nabla f(\x_1)^T (\x_1-\x_2),
\end{equation} for every $\x_1,\x_2 \in \mathcal{X}$, where $\nabla f(\x)$ is the gradient of $f$ at $\x$. This is called the convex regression problem. Convex regression can easily be modified to allow concave regression by multiplying all of the values by negative one.

Convex regression problems are common in economics, operations research and reinforcement learning. In economics, production functions~\citep{Sk78} and consumer preferences~\citep{MePr68} are often convex, while in operations research and reinforcement learning, value functions for stochastic optimization problems can be convex~\citep{ShDeRu09}. If a problem is known to be convex, a convex regression estimate provides advantages over an unrestricted estimate. First, convexity is a powerful regularizer:  it places strong conditions on the derivatives---and hence smoothness---of a function. Convexity constraints can substantially reduce overfitting and lead to more accurate predictions. Second, maintaining convexity allows the use of convex optimization solvers when the regression estimate is used in an objective function of an optimization problem.

Multivariate convex regression has received relatively little attention in the literature. The oldest method is the least squares estimator (LSE)~\citep{Hi54,Dy83,BoVa04,SeSe11},
\begin{align}\label{eq:old}
\min_{\hat{y}_{1:n}, \mathbf{g}_{1:n}} & \sum_{i=1}^n \left(y_i - \hat{y}_i\right)^2 \\\notag
\mathrm{subject \ to \ } & \hat{y}_j \geq \hat{y}_i + \mathbf{g}_i^T(\x_j - \x_i), \ \ \ i,j = 1,\dots,n.
\end{align}The resulting function is piecewise linear, generated by taking the maximum over the supporting hyperplanes, $\mathbf{g}_{1:n}$. However, Equation (\ref{eq:old}) has $n^2$ constraints, making solution infeasible for more than a few thousand observations. Recently, there has been interest in multivariate convex regression beyond the LSE. \citet{HePa09} proposed a method that generates a regression estimator via a weighted kernel estimate subject to conditions on the Hessian of the estimator; solutions are found using sequential quadratic programming. Convexity is guaranteed only at points where the Hessian condition is enforced and the method does not scale well to high dimensions or large datasets. \citet{HaDu11b} proposed a method, Convex Adaptive Partitioning (CAP), that adaptively splits the dataset and fits linear estimates within each of the subsets. Like the least squares estimator, the CAP estimator is formed by taking the maximum over hyperplanes; unlike previous methods, it produces a sparse estimator that scales well to large datasets and large numbers of covariates. However, it has theoretical guarantees only in the univariate case.

Piecewise planar models, like the LSE and CAP, are poor when used in the objective function of an optimization problem. The minima of piecewise planar functions occur at a vertex where $p+1$ hyperplanes intersect. The location of vertices is sensitive to the number of hyperplanes and the hyperplane parameters. The parameters are in turn sensitive to noise and observation design. Bayesian models could reduce these problems: prior distributions on parameters reduce design sensitivity and model averaging produces a smoother estimate.

Bayesian models have been used for convex regression, but only in the univariate case. In this setting, methods rely on the ordering implicit to the real line:  a positive semi-definite Hessian translates into an increasing derivative function in one dimension. \citet{RaLaSm93} discretized the covariate space and placed a Dirichlet prior over the normalized integral of the slope parameters between those points. \citet{ChChLi07} used Bernstein polynomials as a basis by placing a prior on the number of polynomials and then sampling from a restricted set of coefficients. \citet{ShWaDa11} used fixed knot and free-knot splines with a prior that placed an order restriction on the coefficients for each basis function. In a single dimension, Bayesian convex regression is closely related to Bayesian isotonic regression~\citep{LaMo95,NeDu04,ShSaWa09}. In multiple dimensions, however, convexity constraints become combinatorially difficult to enforce through projections.




We take an entirely different approach to modeling convex functions. Instead of creating an estimator based on a set of restricted parameters or projecting an unconstrained estimate back into the space of convex functions, we place a prior over a smaller set of functions that are guaranteed to be convex:  piecewise planar functions. The number of hyperplanes and their parameters are random; we define the function to be the maximum over the set of hyperplanes. We efficiently sample from the posterior distribution with reversible jump Markov chain Monte Carlo (RJMCMC). We call this approach Multivariate Bayesian Convex Regression (MBCR). Although the set of piecewise planar functions does not include all convex functions, it is dense over that space and we show strong ($L_1$) consistency for MBCR. If $f(\x) = g(\A\x)$ for some $d \times p$ matrix $\A $ and function $g$, we show convergence rates for MBCR with respect to the $L_2$ norm to be $\log(n)^{-1} n^{-1/(d+2)}$. The dimension of the linear subspace, $d$, determines the convergence rate, not the dimension of the full space, $p$.


In numerical experiments, we show that MBCR produces estimates that are competitive with LSE and CAP in terms of traditional metrics, like mean squared error, and can outperform them in objective function approximation. Through examples on toy problems, we show that MBCR has the potential to produce regression estimates that are much better suited to objective function approximation than piecewise planar methods. 






\section{Multivariate Bayesian Convex Regression}\label{sec:model}

Convexity is defined by the set of supporting hyperplane constraints in Equation (\ref{eq:convexity}):  any supporting hyperplane of the function $f$ at $\x_1$ is less than or equal to $f(\x_2)$ at any other point $\x_2$. This is equivalent to $f$ having a positive semi-definite Hessian. In multiple dimensions, it is difficult to project onto the set of functions that satisfy these constraints. Instead of placing a prior over an unconstrained set of functions and then restricting the parameters to meet convexity conditions, we place a prior over a smaller set of functions that automatically meet the conditions. Specifically, for all $\x$ in a compact set $ \mathcal{X}$ we place a prior over all functions that are the maximum over a set of $K$ hyperplanes, $(\alpha_1,\beta_1),\dots,(\alpha_K,\beta_K)\in \R^{p+1}$,
\begin{equation}\label{eq:max}
f(\x) = \max_{k \in \{1,\dots,K\} } \alpha_k + \beta^T_k \x,
\end{equation}where $K$ is unknown. This set of functions can approximate any convex function $f$ arbitrarily well while maintaining straightforward inference.


Assuming $f(\x)$ follows Equation (\ref{eq:max}), we let
\begin{equation}\label{eq:fTheta}
Y_i = f(\x_i ; \theta) + \epsilon_i , \quad \epsilon_i  \sim N(0,\sigma^2),
\end{equation}where the unknown parameters are $$\theta =\{ K, \alpha = (\alpha_1,\dots,\alpha_K)^T, \beta = (\beta_1,\dots, \beta_K)^T, \sigma^2 \}.$$ The prior $\Pi$ over $\{K,\alpha,\beta, \sigma^2\}$ is factored as,
\begin{equation}\notag
\Pi(K,\alpha,\beta,\sigma^2) = \Pi_{\sigma}(\sigma^2) \Pi_K(K) \prod_{k=1}^K \Pi_{\theta}(\alpha_{k},\beta_k).
\end{equation}The prior for the variance parameter, $\sigma^2$, is defined as $\Pi_\sigma$, and the prior for the number of hyperplanes, $K$, is $\Pi_K.$ The hyperplane parameters, $\theta_k = (\alpha_k,\beta_k)^,$ are given the prior $\Pi_{\theta}.$ These yield the model,
\begin{align}\notag
K & \sim \Pi_K, & 
\sigma^2 & \sim \Pi_{\sigma}, & 
\theta_k \g K & \sim \Pi_{\theta}, & k = 1,\dots,K.
\end{align}

MBCR is similar to Bayesian adaptive regression spline (BARS) models~\citep{DeMaSm98,DiGeKa01,ShSaWa09,ShWaDa11} in that the method places a prior over a finite set of locally parametric models, with the prior accommodating uncertainty in the number of models, their locations and their parameters. Indeed, we use the same inference method:  reversible jump Markov chain Monte Carlo (RJMCMC). In both cases, RJMCMC works by adaptively adding and removing local models while updating the model-specific parameters. However, while BARS explicitly introduces random changepoints or knots within a region, in MBCR regions are implicitly defined as corresponding to locations across which a particular hyperplane dominates. Let $\{A_1,\dots,A_K\}$ be a partition of $\mathcal{X}$ where $$A_k = \{ \x \in \mathcal{X} \, : \, k = \arg \max_{j \in \{1,\dots, K\} } \alpha_j + \beta_j^T \x \}.$$As in the local knot search of \citet{DiGeKa01}, we use these regions to produce an efficient proposal distribution for the RJMCMC. We discuss implementation details for MBCR in Section \ref{sec:implementation}, but first we show consistency and rate of convergence for MBCR in Section \ref{sec:theory}.

\section{Theoretical Results}\label{sec:theory}
Posterior consistency occurs if the posterior assigns probability converging to one in arbitrarily small neighborhoods of the true function $f_0$ as the number of samples $n$ grows. The rate of convergence is the rate at which the neighborhood size can contract with respect to $n$ while still maintaining consistency. Despite the longstanding interest in shape-restricted estimators, relatively little work has explored their asymptotic properties---particularly in multivariate and Bayesian settings. In the frequentist framework, \citet{HaPl76} showed consistency of the univariate LSE for convex regression; \citet{GrJoWe01} showed it has a local convergence rate of $n^{-2/5}$. More recently, \citet{SeSe11} showed consistency for the multivariate LSE. 

There is also a recent literature on the related topic of multivariate convex-transformed density estimation. \citet{CuSaSt10} showed consistency for the MLE log-concave density estimator;  \citet{SeWe10} showed consistency for the MLE of convex-transformed density estimators and gave a lower minimax bound on the convergence rate of $n^{-2/(p+4)}$. Bayesian shape-restricted asymptotics have received even less attention. \citet{ShSaWa09} showed consistency for monotone regression estimation with free knot splines in the univariate case; this was extended to univariate convex regression estimation by \citet{ShWaDa11}.


 Let $\theta \in \Theta$ be the set of parameters to be estimated. Let $\Pi$ be the prior induced on $f$ by
 \begin{align}\label{eq:modelThm}
 K-1 & \sim Poisson(\lambda), & 
\sigma^2 & \sim \Pi_{\sigma}, & 
\theta_k \g K & \sim \Pi_{\theta}, & k = 1,\dots,K,
 \end{align}where $\Pi_{\sigma}$ is defined in Assumption {\bf B2} and $\Pi_{\theta}$ in Assumptions {\bf B3} and {\bf B4}.
 We consider strong, or $L_1$, consistency. That is, let $$L_{\epsilon} = \left\{ (f,\sigma) : \int_{\mathcal{X}} \left| f(\x) - f_0(\x) \right| dx < \epsilon, \ \left| \frac{\sigma}{\sigma_0} - 1\right| < \epsilon\right\},$$ where the data-generating model is 
 \begin{equation}\notag
 Y_i = f_0(\x_i) + \epsilon_i, \quad \epsilon_i  \sim N(0,\sigma_0)^2.
 \end{equation} We would like $\Pi(L_{\epsilon}^C | (X_i,Y_i)_{i=1}^n ) \rightarrow 0$ as $n \rightarrow \infty$, almost surely $\p_{f_0,\sigma_0}^{\infty}$, where $\p_{f_0,\sigma_0}^{\infty}$ is the product measure under the true distribution. Throughout the rest of this paper, we use lower case $\x_i$ and $y_i$ to denote known or observed quantities, while $\X_i$ and $Y_i$ denote random variables. We show that MBCR is strongly consistent under a general set of conditions.

Bayesian rates of convergence are slightly different from their frequentist counterparts. A series $(\epsilon_n)_{n=1}^{\infty}$ where $\epsilon_n \rightarrow 0 $ is a rate of convergence under a metric $d(\theta,\theta_0)$ if 
\begin{equation}\notag
\p_{f_0,\sigma_0}^{\infty} \, \Pi(\theta \in \Theta \, : \, d(\theta,\theta_0 ) \geq H_n \epsilon_n  | (X_i,Y_i)_{i=1}^n ) \rightarrow 0
\end{equation} for every $H_n \rightarrow \infty$. We examine convergence rates with respect to the empirical $L_2$ norm. Moreover, if $f_0$ actually maps a $d$-dimensional linear subspace of $\R^p$ to $\R$, then the convergence rate is determined by the dimensionality of the subspace, $d$, rather than the full dimensionality, $p$.


\subsection{Consistency}\label{sec:consistency}

We consider two design cases for consistency: fixed design and random design. We place a series of assumptions on the true function, the prior and the design. Some of the assumptions on the prior are specific to the design type. In both cases, we assume that $f_0$ is uniformly bounded:
\begin{enumerate}
	\item[{\bf B1.}] The function $f_0$ is uniformly bounded on the compact set $\mathcal{X}$.
\end{enumerate}Without loss of generality, we assume that $\mathcal{X} = [0,1]^p$.

For both design types, we need define the prior $\Pi_{\sigma}$ and $\Pi_{\theta}$ in Equation (\ref{eq:modelThm}). First, we assume that the prior on $\sigma^2$ has compact support bounded away from zero. This is not a restrictive assumption in practice since zero measurement error is unlikely to occur and an upper bound can be easily chosen to cover a wide range of plausible values. Second, in the case of fixed design, we assume compact support of the prior for the hyperplane parameters; again, a wide range of plausible values can be chosen. Truncated normal and inverse-gamma distributions provide a convenient choice.
\begin{enumerate}
	\item[{\bf B2.}] Let $\Pi_{\sigma}$ be the prior on $\sigma$; $\Pi_{\sigma}$ is non-atomic and only has support over $[\underline{\sigma}, \bar{\sigma}]$ with $0 < \underline{\sigma} < \sigma_0 < \bar{\sigma}< \infty.$
	\item[{\bf B3.}] Let $\Pi_{\theta} = N_{p+1}(\mu_{\alpha,\beta},V_{\alpha, \beta})$ be the prior on $\theta_k$, where $N_{p+1}$ is the $p+1$ dimensional Gaussian distribution.
	 \item[{\bf B4.}] Let $\Pi^*_{\theta} = N_{p+1}(\mu_{\alpha,\beta},V_{\alpha, \beta})$. Let $L$ be a constant such that $L > || \frac{\partial}{\partial x_j} f_0(\x) ||_{\infty}$ and for some $V > \frac{1}{\sqrt{p}}L$, let $$ \Omega = \left\{ (\alpha, \beta) \, : \, \max \{\alpha,\beta_1,\dots,\beta_p\} \leq V\right\}.$$ Set $\Pi_{\theta} = \Pi^*_{\theta}(\cdot \cap \Omega) / \Pi^*_{\theta}(\Omega)$ and let $\theta_k \sim \Pi_{\theta}$.
\end{enumerate}

For both design cases, we need to ensure that the covariate space is sufficiently well-sampled.
\begin{enumerate}
	\item[{\bf B5.}] For each hypercube $H$ in $\mathcal{X}$, let $\lambda(H)$ be the Lebesgue measure. Suppose that there exists a constant $K_p$ with $0 < K_p \leq 1$ such that whenever $\lambda(H) \geq \frac{\lambda(\mathcal{X})}{K_p n}$, $H$ contains at least one design point for sufficiently large $n$.
	\item[{\bf B6.}] Let $Q$ be the density of the random design points; $Q$ is non-atomic and $Q(x) > 0$ for every $x \in \mathcal{X}$.
\end{enumerate}

With these assumptions, we now give consistency results.
\begin{thm}\label{thm:L1Fixed}
Assume that $\mathcal{X}$ is compact, the covariate design is fixed and that $f_0$ is convex with continuous first order partial derivatives. Suppose that conditions {\bf B1}, {\bf B2}, {\bf B4} and {\bf B5} hold. Then for every $\epsilon > 0$, $$ \p_{f_0,\sigma_0}^{\infty} \, \Pi \left( L_{\epsilon}^C \g Y_1,\dots, Y_n , \x_1, \dots, \x_n \right) \rightarrow 0.$$
\end{thm}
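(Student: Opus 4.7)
The plan is to apply the Schwartz / Choudhuri--Ghosal--Roy framework for posterior consistency with independent but non-identically distributed observations. The three ingredients I would verify are: (i) every Kullback--Leibler neighborhood of the true conditional density $N(f_0(\x),\sigma_0^2)$ receives positive prior mass; (ii) there exist sieves $\F_n \subset \Theta$ whose complement carries prior mass decaying exponentially in $n$; and (iii) the sieves admit covering numbers for the empirical $L_1$ pseudo-metric with logarithms of order $o(n\epsilon^2)$. Verifying (i)--(iii) yields consistency in the empirical $L_1$ pseudo-metric $\tfrac{1}{n}\sum_i |f(\x_i)-f_0(\x_i)|$; assumption {\bf B5} is then used to upgrade this to the Lebesgue $L_1$ metric appearing in the definition of $L_\epsilon$.

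For the prior positivity step, I would use that every convex $f_0 \in C^1(\mathcal{X})$ equals the pointwise supremum of its supporting hyperplanes, so an $\epsilon$-grid of tangent planes produces a convex approximant $\tilde f$ with $\|\tilde f - f_0\|_\infty$ arbitrarily small. Because $\|\nabla f_0\|_\infty \le L$ on the compact set $\mathcal{X}$, all relevant hyperplane coefficients lie in the truncation region $\Omega$ of {\bf B4}, so the truncated Gaussian $\Pi_\theta$ has strictly positive density on any neighborhood of them. Combined with positive Poisson mass on each fixed value of $K$ and with the nondegenerate prior on $\sigma$ around $\sigma_0$ from {\bf B2}, every sup-norm ball around $(f_0,\sigma_0)$ has positive prior probability. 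Since $\sigma$ is bounded away from $0$ and $\infty$, the log-likelihood ratio is uniformly Lipschitz in $(f,\sigma)$, so this sup-norm ball embeds into a KL neighborhood of the truth.

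For the sieve I would take $\F_n = \{(K,\theta,\sigma^2) : K \le K_n\}$ with $K_n$ growing like $n/\log n$. The Poisson tail then gives $\Pi(K > K_n) \le C e^{-c K_n \log K_n}$. On $\F_n$ each function is parameterized by at most $K_n(p+1)+1$ real numbers in a bounded set, and the map $(\alpha,\beta)\mapsto \max_k(\alpha_k + \beta_k^\top \x)$ is $1$-Lipschitz in its coefficients with respect to $\|\cdot\|_\infty$; therefore an $\epsilon$-cover of the parameter cube transfers to a sup-norm cover, and a fortiori to an empirical $L_1$ cover, of $\F_n$. This yields a log covering number of order $K_n \log(1/\epsilon)$, which is $o(n\epsilon^2)$ and produces the desired exponentially consistent tests via the standard Birg\'e / Le Cam construction.

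The most delicate step, and the main obstacle, is converting empirical $L_1$ closeness into closeness in the Lebesgue $L_1$ norm used to define $L_\epsilon$. Every element of $\F_n$, as well as $f_0$, is globally Lipschitz on $\mathcal{X}$ with a constant depending only on $V$ and $p$ through {\bf B4}. Assumption {\bf B5} guarantees that $\mathcal{X}$ can be partitioned into hypercubes of side $O(n^{-1/p})$ each containing at least one design point; a standard Riemann-sum bound then gives $\bigl|\int_\mathcal{X} |f-f_0|\,dx - \tfrac{1}{n}\sum_i |f(\x_i)-f_0(\x_i)|\bigr| = O(n^{-1/p})$ uniformly over $\F_n$. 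Combining this discretization bound with the sieve-based consistency on the empirical metric, and with the exponentially small complement $\Pi(K>K_n)$ handled separately, concludes that $\Pi(L_\epsilon^c \mid Y_{1:n},\x_{1:n}) \to 0$ almost surely $\p_{f_0,\sigma_0}^\infty$. The careful part is precisely the interaction between the Lipschitz constant inherited from $\Omega$, the discretization rate from {\bf B5}, and the sieve tail --- which is exactly why {\bf B4} and {\bf B5} appear in the stated form.
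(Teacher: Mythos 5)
Your proposal is correct and follows essentially the same route as the paper: it verifies Schwartz-type prior positivity by approximating $f_0$ from below with a finite mesh of near-tangent hyperplanes whose coefficients lie inside the truncation region $\Omega$ of {\bf B4}, obtains consistency in the empirical $L_1$ pseudo-metric via a sieve with exponentially small prior complement, and then uses {\bf B5} together with the uniform Lipschitz bound inherited from $\Omega$ to pass to the Lebesgue $L_1$ neighborhoods $L_\epsilon$. The only material difference is the sieve: you truncate the number of hyperplanes at $K_n\sim n/\log n$ and build the tests from covering numbers, whereas the paper uses the Choi--Schervish sieve of functions with sup-norm and partial-derivative bounds $M_n=\mathcal{O}(n^{\alpha})$ (whose complement has prior mass exactly zero for large $n$ under {\bf B4}) and inherits both the tests and the empirical-to-Lebesgue transfer from their Theorems 1 and 4.
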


In the stochastic design case, assumptions {\bf B4} and {\bf B5} are replaced by {\bf B3} and {\bf B6}, respectively. We note that for random design, $L_1$ convergence follows directly from convergence in probability for a uniformly bounded function $f_0$.

\begin{thm}\label{thm:L1Random}
Assume that $\mathcal{X}$ is compact, the covariate design is random and that $f_0$ is convex with continuous first order partial derivatives. Suppose that conditions {\bf B1}-{\bf B3} and {\bf B6} hold. Then for every $\epsilon > 0$, $$\p_{f_0,\sigma_0}^{\infty}\, \Pi \left( L_{\epsilon}^C \g Y_1,\dots, Y_n , \X_1, \dots, \X_n \right) \rightarrow 0.$$
\end{thm}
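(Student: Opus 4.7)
The plan is to apply the Schwartz--Ghosal--Ghosh--van der Vaart framework for Bayesian posterior consistency in regression, verifying two ingredients: (i) positive prior mass on Kullback--Leibler neighborhoods of the data-generating joint density $p_{f_0,\sigma_0}(\x,y) = Q(\x)\phi_{\sigma_0}(y-f_0(\x))$, and (ii) a sieve $\mathcal{F}_n \subset \Theta$ whose complement receives exponentially small prior mass and whose $L_1$-metric entropy grows sub-linearly in $n$. Since the covariates are random and $Q$ is strictly positive on compact $\mathcal{X}$ by {\bf B6}, and $f_0$ is uniformly bounded by {\bf B1}, $L_1$ consistency of $f$ with respect to Lebesgue measure on $\mathcal{X}$, together with $|\sigma/\sigma_0 - 1|$ smallness, is equivalent (up to constants) to Hellinger consistency of the joint density, which is what the testing framework delivers.

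For the KL condition, the key approximation fact is that any convex $f_0$ with continuous first-order partials on $\mathcal{X}$ can be uniformly approximated to within $\delta$ by the maximum of $K_\delta$ tangent hyperplanes anchored at a $\delta$-net of points: this follows from the convexity inequality (\ref{eq:convexity}) applied at net points, combined with uniform continuity of $\nabla f_0$ on the compact domain. Because the Poisson prior $\Pi_K$ charges every finite $K$, the full-support Gaussian $\Pi_\theta$ in {\bf B3} charges every open ball of hyperplane parameters, and $\sigma_0$ lies strictly inside the support $[\underline{\sigma},\bar{\sigma}]$ of the non-atomic prior $\Pi_\sigma$ by {\bf B2}, the prior assigns positive probability to a set of $(f,\sigma)$ with $\|f-f_0\|_\infty < \delta$ and $|\sigma-\sigma_0|<\delta$. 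A direct Gaussian-KL calculation then bounds $\mathrm{KL}(p_{f_0,\sigma_0}, p_{f,\sigma})$ by a constant multiple of $\delta^2$ plus the variance discrepancy, giving (i).

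For the sieve, I take $\mathcal{F}_n = \{(K,\alpha,\beta,\sigma): K \leq K_n,\ \max_{k \leq K}\|(\alpha_k,\beta_k)\|_\infty \leq M_n,\ \sigma \in [\underline{\sigma},\bar{\sigma}]\}$ with $K_n$ a small power of $n$ and $M_n$ of order $\sqrt{cn}$. A Poisson tail bound controls $\Pi(K > K_n)$, and a Gaussian tail plus a union bound over at most $K_n(p+1)$ coordinates controls $\Pi(\max_k\|(\alpha_k,\beta_k)\|_\infty > M_n \mid K \leq K_n)$; with the constants chosen, $\Pi(\mathcal{F}_n^c) \leq e^{-c_0 n}$. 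Functions in $\mathcal{F}_n$ are maxima of at most $K_n$ affine pieces with parameters in a box of radius $M_n$, and perturbing those parameters by $\eta$ in $\ell_\infty$ perturbs $f$ in $L_\infty(\mathcal{X})$ by a constant multiple of $\eta$. This yields an $L_1$ covering-number bound of order $K_n(p+1)\log(M_n/\epsilon)$, which is $o(n)$ for the chosen scalings, so the Ghosal--van der Vaart entropy condition holds.

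The main obstacle is that, unlike the fixed-design case where {\bf B4} gives a compactly supported prior on the hyperplane parameters, here the Gaussian prior $\Pi_\theta$ of {\bf B3} is genuinely unbounded, forcing careful joint calibration of $K_n$ and $M_n$ so that the prior-mass and entropy bounds line up. A subtler point is converting tests on the joint density back to the stated neighborhood $L_\epsilon$ on $(f,\sigma)$: strict positivity of $Q$ from {\bf B6}, boundedness of $\sigma$ away from $0$ and $\infty$ on the sieve, and the $L^\infty$ bound on $f$ enforced by $M_n$ make Hellinger, $L_1$ on the joint density, and $L_1$ on $f$ plus $|\sigma/\sigma_0 - 1|$ all comparable on $\mathcal{F}_n$. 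Once these pieces are assembled, the standard argument---lower-bounding the posterior denominator using the KL condition and upper-bounding the numerator using the tests produced by the entropy bound on $\mathcal{F}_n$, combined with the exponentially small prior mass outside---gives $\Pi(L_\epsilon^c \mid (X_i, Y_i)_{i=1}^n) \to 0$ almost surely $\p_{f_0,\sigma_0}^\infty$.
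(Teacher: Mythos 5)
Your overall architecture---Kullback--Leibler prior positivity plus a sieve with exponentially small complement and controlled entropy, followed by the standard numerator/denominator argument---is exactly the program the paper runs, except that the paper packages the testing step by verifying conditions {\bf A1} and {\bf A2} and invoking the regression consistency theorems of Choi and Schervish; your KL step (tangent hyperplanes anchored at a $\delta$-net, full support of the Poisson, Gaussian and $\sigma$ priors, a direct Gaussian KL computation) matches the paper's Lemma~\ref{prop:priorKL} in substance. The genuine gap is in your sieve. You truncate the number of hyperplanes at $K_n$ equal to ``a small power of $n$'' and claim that a Poisson tail bound then yields $\Pi(\mathcal{F}_n^c)\le e^{-c_0 n}$. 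It does not: $\Pi_K(K>K_n)\asymp e^{-K_n\log(K_n/(e\lambda))}$, which for $K_n=n^{\gamma}$ with $\gamma<1$ is only of order $e^{-c\,n^{\gamma}\log n}$, and this is \emph{not} $o(e^{-2n\beta})$ for any $\beta>0$. Since the KL condition only lower-bounds the posterior denominator by $e^{-n\beta}$ (for each fixed $\beta$, eventually), exponential-in-$n$ decay of the complement's prior mass is genuinely required, so your calibration fails. It can be repaired by taking $K_n\asymp n/\log n$ with a small constant, at the price of an entropy bound $K_n(p+1)\log(M_n/\epsilon)\asymp n$ that must then be re-balanced against the testing constant; but the cleaner route---the one taken in Lemma~\ref{lem:priorSieve}---is not to truncate $K$ at all. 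Define the sieve by $\|f\|_\infty<M_n$ and $\|\partial f/\partial x_j\|_\infty<M_n$; its complement is contained in the event that some $|\alpha_j|$ or $|\beta_{j,\ell}|$ exceeds $M_n/(2\sqrt{p})$, and summing Gaussian tails weighted by $\Pi_K(K=k)$ requires only $\E_\Pi[K]<\infty$, giving $C_1e^{-c_1 n^{2\alpha}}\le C_1e^{-c_1 n}$ for $M_n=\mathcal{O}(n^{\alpha})$ with $\alpha>1/2$.

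A second, smaller soft spot is your assertion that Hellinger distance on the joint density and Lebesgue $L_1$ distance on $f$ are ``comparable on $\mathcal{F}_n$.'' The comparison constant degrades as $\|f\|_\infty$ grows, and on your sieve $\|f\|_\infty$ can be of order $M_n\to\infty$: a convex $f$ that agrees with $f_0$ off a thin boundary slab but has slope of order $M_n$ there is Hellinger-close to $f_0$ while its Lebesgue $L_1$ distance need not be small. The paper sidesteps this by first obtaining consistency in Choi--Schervish's in-$Q$-probability and $L_1(Q)$ neighborhoods and then using $\inf_{\x\in\mathcal{X}}Q(\x)>0$ from {\bf B6} to pass to Lebesgue $L_1$ via $\int|f-f_0|\,d\x\le (\inf Q)^{-1}\int|f-f_0|\,dQ$, which requires no uniform sup-norm control on the posterior draws.
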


To prove Theorems \ref{thm:L1Fixed} and \ref{thm:L1Random}, we use the consistency results for Bayesian nonparametric regression of \citet{ChSc07}. We show that the prior $\Pi$ satisfies the following assumptions,
\begin{enumerate}
	\item[{\bf A1.}] The prior $\Pi$ puts positive measure on the neighborhood $$B_{\delta} = \left\{ (f,\sigma) : || f - f_0 ||_{\infty} < \delta, \, \left| \frac{\sigma}{\sigma_0} - 1\right| < \delta \right\}$$ for every $\delta > 0$.
	\item[{\bf A2.}] Set $\Theta_n = \Theta_{1n} \times \R_+$, where $$\Theta_{1n} = \left\{f : ||f||_{\infty} < M_n, \, \left| \left| \frac{\partial}{\partial x_j} f \right| \right|_{\infty} < M_n, \, j = 1,\dots,p\right\},$$ and $M_n = \mathcal{O}(n^{\alpha})$ with $\frac{1}{2} < \alpha < 1$. Then there exists $C_1,c_1 > 0$ such that $\Pi(\Theta^C_n) \leq C_1 e^{-c_1n}$.
\end{enumerate}

\citet{ChSc07} modifies the consistency theorem of \citet{Sc65} for non-i.i.d. observations; the requirements are prior positivity on a set of variationally close Kullback-Leibler neighborhoods and the existence of exponentially consistent tests separating the desired posterior region from the rest. Assumption {\bf A1} satisfies the prior positivity while assumption {\bf A2} constructs a sieve that is used to create the exponentially consistent tests. Assumptions {\bf A1} and {\bf A2} generate pointwise convergence in the empirical and in-$Q$-probability metrics for the fixed and random design cases, respectively. Assumptions {\bf B1} to {\bf B6} are then used to extend consistency under these metrics to consistency under the $L_1$ metric. See Appendix A for details.	

\subsection{Rate of Convergence}\label{sec:rate}
We determine the rate of convergence of MBCR with respect to the empirical $L_2$ norm,
\begin{equation}\notag
||f||_n = \left(\frac{1}{n} \sum_{i=1}^n f(\x_i)^2 \right)^{1/2}.
\end{equation}For both the fixed design and random design cases, we make the following assumptions:
\begin{enumerate}
	\item[{\bf B7.}] The model variance $\sigma_0^2$ is known.
	\item[{\bf B8.}] There exists a convex function $g_0 : \R^d \rightarrow \R$ and a matrix $\mathbf{A} \in \R^{p \times d}$ with rank $d$ where $d \leq p$ such that $f_0( \x ) = g_0(\A \x)$.
\end{enumerate}We make assumption {\bf B7} for convenience; it can be loosened with sufficient algebraic footwork. Assumption {\bf B8} says that $f_0$ actually lives on a $d-$dimensional subspace; this is not restrictive as it is possible for $\A = I_p$. However, many situations arise when $d << p$. For example, there may be extraneous covariates or the mean function may be a function of a linear combination of the covariates---in effect, $d=1$. It is not required that $\A$ is known {\it a priori}, but simply that it exists. We also keep assumption {\bf B4}, which truncates the tails of the Gaussian priors for the hyperplane slopes and intercepts; this is done to bound the prior probability of the compliment of the sieve.

\begin{thm}\label{thm:fixedRate}
Assume that $\mathcal{X}$ is compact and that $f_0$ is convex, has continuous first order partial derivatives and suppose that conditions {\bf B1}, {\bf B4}, {\bf B7} and {\bf B8} hold. For both random covariates and fixed covariates and sufficiently large $V$,
\begin{equation}\notag
\p_{f_0}^{\infty}\,  \Pi\left( f \, : \, ||f - f_0||_n \geq H_n \epsilon_n  \g Y_1,\dots, Y_n , \x_1, \dots, \x_n \right)\rightarrow 0
\end{equation}for any $H_n \rightarrow \infty$, where $\epsilon_n^{-1} = \log( n) \, n^{1/(d+2)}$.
\end{thm}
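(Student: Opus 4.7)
The plan is to invoke the standard posterior contraction framework for non-iid data (Ghosal--van der Vaart, \emph{Ann. Stat.} 2007), which naturally extends the Choi--Schervish machinery already used to prove Theorems \ref{thm:L1Fixed} and \ref{thm:L1Random} from consistency to rates. The framework reduces the problem to verifying, for a suitable rate $\epsilon_n$ and sieve $\mathcal{F}_n$, three conditions: (i) a Kullback--Leibler-type prior mass bound $\Pi(B_n(\epsilon_n)) \geq \exp(-cn\epsilon_n^2)$ on a neighborhood of $f_0$, (ii) a metric entropy bound $\log N(\epsilon_n, \mathcal{F}_n, \|\cdot\|_n) \leq C n\epsilon_n^2$, and (iii) an exponentially small sieve complement $\Pi(\mathcal{F}_n^c) \leq \exp\{-(C+4)n\epsilon_n^2\}$. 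Since $\sigma_0^2$ is known by B7 and the noise is Gaussian, $B_n(\epsilon_n)$ reduces to an empirical $L_2$ ball around $f_0$ in the fixed-design case and an $L_2(Q)$ ball in the random-design case, so the conclusion will be stated uniformly in both settings.

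The engine behind the rate is a deterministic approximation in the intrinsic dimension: any convex function with continuous partials on a compact convex subset of $\R^d$ admits a maximum of $K$ tangent hyperplanes whose sup-norm approximation error is $O(K^{-1/d})$. I place $K$ subgradient hyperplanes at a grid of spacing $K^{-1/d}$ in $\A([0,1]^p)\subset\R^d$ and use the Lipschitz bound of $g_0$ (from B1 and B8, on a compact set) to control the error between grid points. Pulling these hyperplanes back through $\A$ gives an approximating function $f_K^*(\x) = \max_k \alpha_k^* + (\A^T\delta_k^*)^T\x$ on $\R^p$ with $\|f_K^*-f_0\|_\infty \leq CK^{-1/d}$. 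Choosing $K_n \asymp \epsilon_n^{-d}$ makes the approximation error of order $\epsilon_n$, and the prior mass is bounded below by
\begin{equation}\notag
\Pi_K(K=K_n)\prod_{k=1}^{K_n}\Pi_\theta\left(\|\theta_k-\theta_k^*\|\leq c\epsilon_n\right) \gtrsim \exp\left(-c_1 K_n\log K_n - c_2 K_n(p+1)\log(1/\epsilon_n)\right),
\end{equation}
using the Poisson density of $K-1$ at $K_n$ and the truncated Gaussian small-ball probability on $\Omega$, which is why we need $V$ large enough that each $\theta_k^*$ lies strictly interior to $\Omega$.

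For the sieve I take $\mathcal{F}_n = \{f_\theta : K\leq \bar K_n,\ \theta_k\in\Omega\}$ with $\bar K_n \asymp n\epsilon_n^2/\log(1/\epsilon_n)$. Since a single-parameter perturbation of a max-of-hyperplanes function changes its sup-norm value by at most $(1+\sqrt{p})\delta$, the $\epsilon_n$-covering number of $\mathcal{F}_n$ is at most $\sum_{K\leq \bar K_n}(CV/\epsilon_n)^{K(p+1)}$, giving $\log N(\epsilon_n,\mathcal{F}_n,\|\cdot\|_n) \leq \bar K_n(p+1)\log(CV/\epsilon_n)$, which is $\leq C' n\epsilon_n^2$ by construction. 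The Poisson tail $\Pi_K(K>\bar K_n)\leq e^{-c\bar K_n}$ with $\bar K_n \gtrsim n\epsilon_n^2$ immediately yields the sieve-complement bound. Assembling the three pieces through Ghosal--van der Vaart gives $\epsilon_n^{-(d+2)}\log(1/\epsilon_n) \lesssim n$, and hence a posterior contraction rate of order $n^{-1/(d+2)}$ up to logarithms.

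The main obstacle is ensuring that the rate is governed by $d$ rather than by the ambient $p$. Although the $p$-dimensional Gaussian prior on each $\beta_k$ contributes a factor of $p+1$ in the log-small-ball probability, this only multiplies constants: the polynomial rate is driven by $K_n \asymp \epsilon_n^{-d}$, which is itself dictated by the $d$-dimensional geometry of $g_0$ on $\A([0,1]^p)$, not by the covariate space. The delicate step is constructing the target hyperplanes so that each slope parameter $\A^T\delta_k^*$ lies in the intrinsic $d$-dimensional slice of $\R^p$ and then verifying that the truncated Gaussian prior still places at least polynomial-in-$\epsilon_n$ mass on $\epsilon_n$-balls around them, so that the bad $(p+1)$ factor is absorbed into constants. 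Sharpening the logarithmic factor to exactly $\log(n)^{-1}$ as stated requires careful bookkeeping of the $\log K_n$ term from the Poisson mass and the $\log(1/\epsilon_n)$ term from the Gaussian small ball, rather than crudely combining them into a single $K_n\log(1/\epsilon_n)$.
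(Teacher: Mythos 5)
Your proposal is correct in its overall architecture and coincides with the paper's proof where it matters most: both reduce the theorem to the non-i.i.d.\ posterior contraction theorem of Ghosal and van der Vaart (2007) (Theorem \ref{thm:GhVa07b}), and both extract the dimension-adaptive rate from the same prior-mass construction --- tangent hyperplanes of $g_0$ placed on a grid of $K_n \asymp \epsilon_n^{-d}$ points in the transformed space $\A(\mathcal{X})\subset\R^d$, pulled back through $\A$, with truncated-Gaussian small-ball probabilities giving $\Pi(B_n^*)\geq \exp\{-cK_n\log(1/\epsilon_n)\}$ and large $V$ ensuring the target coefficients sit inside $\Omega$; this is precisely Lemma \ref{lem:piBounds}. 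Where you genuinely diverge is in the entropy and sieve conditions. You truncate the number of hyperplanes at $\bar K_n\asymp n\epsilon_n^2/\log(1/\epsilon_n)$, cover the resulting parametric family at cost $\bar K_n(p+1)\log(1/\epsilon_n)$ in log-covering number, and control the sieve complement with the Poisson tail. The paper instead uses the sieve $\Theta_{1n}$ of functions with bounded sup norm and partial derivatives --- under {\bf B4} every max-of-hyperplanes draw lands in this set automatically, so $\Pi(\Theta_n^c)=0$ and no truncation of $K$ is needed --- and bounds its local entropy by $C\epsilon_n^{-d/2}$ via Bronshtein's theorem on classes of bounded convex functions (Lemma \ref{lem:metricEntropy}), with tests supplied by Birg\'e's likelihood-ratio construction. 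Your route is more elementary and self-contained; the paper's buys a smaller entropy exponent but must justify carrying out the covering in the $d$-dimensional transformed space for the entire sieve rather than only for $f_0$. One caveat: your balance $\epsilon_n^{-(d+2)}\log(1/\epsilon_n)\leq Cn$ delivers $\epsilon_n\asymp (n/\log n)^{-1/(d+2)}$, i.e.\ a \emph{positive} power of $\log n$, not the $\log(n)^{-1}n^{-1/(d+2)}$ asserted in the statement; you flag this honestly, and the paper's own bound $\Pi(B_n^*)\geq C_1 e^{c_1\epsilon_n^{-d}\log\epsilon_n}$ faces the identical tension, so this is not a defect of your argument relative to the published one.
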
Theorem \ref{thm:fixedRate} is proven by showing that the conditions for Theorem 3 of \citet{GhVa07b} are satisfied. Details are given in Appendix B. 

We note that the rates achieved in Theorem \ref{thm:fixedRate} are within a log term of global minimax rates for general nonparametric convergence, $\epsilon_n = n^{-1/(p+2)}$, assuming $\A = I_p$. However, the $\epsilon$-metric entropy of the set of bounded convex functions with respect to the $|| \cdot ||_{\infty}$ metric scales like $\epsilon^{-p/2}$~\citep{VaWe96}, leaving open the possibility of convergence rates of $\epsilon_n = n^{-2/(p+4)}$ for bounded convex functions. In certain settings of convex-transformed density estimation that rate has been obtained~\citep{SeWe10}. We, however, do not believe that MBCR achieves this rate in a general setting.

\section{Implementation}\label{sec:implementation}
In this section, we extend MBCR to a model that can accommodate heteroscedastic data and provide a reversible jump MCMC sampler.

\subsection{Heteroscedastic Model}
The model in Section \ref{sec:model} assumes a global variance parameter, $\sigma^2$. While this is often a reasonable assumption, it can lead to particularly poor results when it is violated in a shape-restricted setting: locally chasing outliers in high-variance regions can lead to globally poor prediction due to the highly constrained nature of convex regression. 
To accommodate heteroscedasticity, we consider the following model,
\begin{equation}\notag
Y_i  = f(\x_i; \theta ) + \epsilon_i, \quad \epsilon_i  \sim N(0,g(\x_i)),
\end{equation}where $g : \R^p \rightarrow \R_+$. Specifically, to induce a flexible prior on $g$, we introduce a separate variance term for each hyperplane and modify the model to let
\begin{align}\notag
Y_i & = \max_{k \in \{1,\dots,K\}} \alpha_k + \beta^T_k \x_i + \epsilon_i, \quad \epsilon_i  \sim N(0,\sigma^2_k),\\\notag
(\theta_k,\sigma^2_k) & \sim N_{p+1}IG(\mu_{\alpha,\beta},V_{\alpha,\beta},a,b), \quad k = 1,\dots, K,\\\notag
K - 1 & \sim Poisson(\lambda).
\end{align}Here $N_{p+1}IG$ denotes the normal inverse gamma distribution with a $p+1$ dimensional normal. We choose a Poisson prior for the number of components, although we note that the model is generally not sensitive to the prior on the number of components. Due to the adaptable nature of the heteroscedastic model and its resistance to variance misspecification, we use it for all numerical work.

\subsection{Posterior Inference}
To sample from the posterior distribution, we use RJMCMC with the marginal posterior distribution of $\{K, \alpha, \beta, \sigma^2\}$ as the stationary distribution. Similar methods have been used for posterior inference on free-knot spline models by \citet{DeMaSm98} and \citet{DiGeKa01}. 

RJMCMC works by proposing a candidate model, $\{K^*, \alpha^*, \beta^*, {\sigma^2}^*\}$, and determining whether or not to move to that new model based on a Metropolis-Hastings type acceptance probability,
\begin{align}\label{eq:acceptanceP}
a(K^*, \alpha^*,\beta^*,{\sigma^2}^* & \g K, \alpha, \beta, \sigma^2) = \min\left\{ 1 , \frac{p(Y \g \x, K^*, \alpha^*, \beta^*,{\sigma^2}^*)}{p(Y \g \x, K, \alpha, \beta,\sigma^2)}\right.\\\notag
& \times \left. \frac{\Pi(K^*, \alpha^*, \beta^*,{\sigma^2}^*)}{\Pi(K, \alpha, \beta,\sigma^2)} \frac{q(K, \alpha, \beta,\sigma^2 \g K^*, \alpha^*, \beta^*, {\sigma^2}^*)}{q(K^*, \alpha^*, \beta^* , {\sigma^2}^* \g K, \alpha, \beta, \sigma^2)}\right\}.
\end{align}Here $p(Y \g \x, K^*, \alpha^*, \beta^*, {\sigma^2}^*)/p(Y \g \x, K, \alpha, \beta,\sigma^2)$ is the likelihood ratio of the data conditioned on the models, $\Pi(K^*, \alpha^*, \beta^*,{\sigma^2}^*)/\Pi(K, \alpha, \beta,\sigma^2)$ is the prior ratio of the models and $$q(K, \alpha, \beta, \sigma^2 \g K^*, \alpha^*, \beta^*, {\sigma^2}^*)/q(K^*, \alpha^*, \beta^*,{\sigma^2}^* \g K, \alpha, \beta,\sigma^2)$$ is an asymmetry correction for the proposal distribution. Candidate models are entirely new models:  all parameters are updated as a block. If only individual parameters or hyperplanes are updated, acceptance rates for parameters in the most constrained areas are orders of magnitude lower than those in the relatively unconstrained regions on the boundary of the function. Without block updates, there is poor mixing. There are three types of candidate models: hyperplane relocations, deletions and additions. All candidate models are generated from proposal distributions, which significantly impact the efficiency of the RJMCMC algorithm.

To generate proposal distributions we use the covariate partition induced by the current model $\{K,\alpha,\beta,\sigma^2\}$ to create a set of basis regions. Basis regions are determined by partitioning the set of training data. For example, suppose that a partition of the observations, $(\x_1,y_1),\dots,(\x_n,y_n)$, has $K$ subsets. Let $C = \{C_1,\dots,C_{K}\}$, where $
C_k = \left\{ i \, : \, i \mathrm{\ in \ subset \ } k \right\}.$ We can use $C$ to produce a set of basis regions for generating $(\alpha^*,\beta^*)$ with $K$ components,
\begin{align}\label{eq:basisRegion}
V_k^* & = \left(\tilde{V}_{\alpha,\beta}^{-1} + \x_{[k]}^T \x_{[k]}\right)^{-1},\\\notag
\mu_k^* & =  V_{k}^*\left(\tilde{V}_{\alpha,\beta}^{-1}\tilde{\mu}_{\alpha,\beta} + \x_{[k]}^T\y_{[k]}\right),\\\notag
a^*_k & =  \tilde{a} + \frac{n_k}{2},\\\notag
b^*_k & = \tilde{b} + \frac{1}{2}\left(\tilde{\mu}_{\alpha,\beta}^T \tilde{V}_{\alpha,\beta}^{-1} \tilde{\mu}_{\alpha,\beta} + \y_{[k]}^T \y_{[k]} - {\mu_{k}^*}^T {V_{k}^*}^{-1}{\mu_{k}^*}\right),\\\notag
(\alpha^*_k,\beta^*_k,{\sigma_k^2}^*) & \sim N_{p+1}IG\left(\mu_{k}^*,V_{k}^*, a_k^*,b_k^* \right), \quad k = 1,\dots,K.
\end{align}Here, $\x_{[k]} = \{[1, \x_i] \, : \, i \in C_k\}$, $\y_{[k]} = \{y_i \, : \, i \in C_k\}$ and $n_k$ is the number of elements in subset $k$. The hyperparameters for the proposal distributions, $(\tilde{\mu}_{\alpha,\beta},\tilde{V}_{\alpha,\beta},\tilde{a},\tilde{b})$, are not necessarily the same as those for the prior. Often the variance parameters are smaller to produce higher acceptance rates. The current set of hyperplanes, $\{K,\alpha,\beta\}$, are used to create the partitions that define the basis regions,
\begin{equation}\notag
C_k = \left\{ i \, : \, k = \arg \max_{j \in \{1, \dots, K\} } \alpha_j + \beta_{j}^T \x_i \right\}.
\end{equation}

For a relocation proposal distribution, the $K$ basis regions are generated by the covariate partition of the current model. The removal proposal distribution is a mixture with $K$ components. Each component is generated by removing the hyperplane $k$ for $k = 1,\dots, K$ and using the remaining $K-1$ hyperplanes to create a set of basis regions. Proposal distributions for additions are less straightforward. The addition proposal distribution is a mixture with $K L M$ components. Beginning with the subsets defined by the current model, $\{K,\alpha,\beta,\sigma^2\}$, each subset $j = 1,\dots,K$ is searched along a random direction $m$ for $m = 1,\dots, M$. On each of those random directions, the subset $j$ is divided according to a knot $a_{\ell}^j$ into the set of observations less than $a_{\ell}^j$ in direction $m$ and those greater. This is done for $\ell = 1,\dots, L$ knots for each subset $j$ and direction $m$. An example is shown in Figure \ref{fig:additions}. Full implementation details are given in Appendix C.

We note that the sampler for MBCR does not behave like a typical MCMC sampler. Convergence and mixing are extremely fast. Unlike most MCMC samplers, the MBCR sampler converges once the ``right'' number of components has been reached, typically within zero to four of the mean number of components. This is due to the way the proposal distributions are constructed and the strict requirements of convexity. Block updating ensures that autocorrelation drops to near zero rapidly. Numerical results suggest this generally happens after about three samples. While convexity endows the sampler with properties like fast convergence, it can also lead to situations where the restrictions are too rigid for the sampler to function. For example, if the noise level is very low, the number of observations is more than a few thousand, or the number of dimensions is moderate to high, the region of admissible models becomes very small and the acceptance rates rapidly drop to zero. Approximate inference methods seem to be required in these situations.

\begin{figure}[t]
\begin{center}
\includegraphics[width=5in,viewport=65 270 680 550]{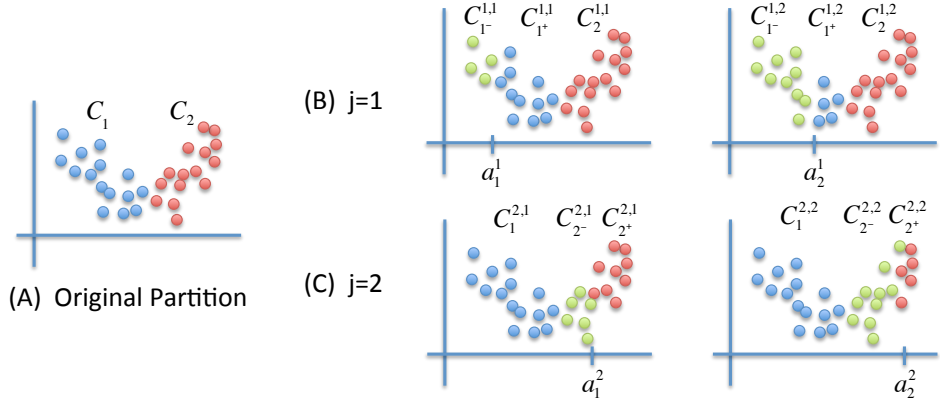}
\end{center}
\caption{Basis regions for one covariate combination when $L = 2$ and $K = 2$. (A) shows the original partition; (B) shows the partition when the region induced by the first hyperplane is split; (C) shows the partition when the region induced by the second hyperplane is split.}
\label{fig:additions}
\end{figure}


\section{Applications}\label{sec:numbers}
In Section \ref{sec:synthetic}, we compare the performance of MBCR to other regression methods on a set of synthetic problems. We show that convexity constraints can produce better estimates than their unconstrained counterparts and that MBCR is competitive with state of the art convex regression methods with respect to mean squared error. In Section \ref{sec:convex}, we analyze the behavior of MBCR, CAP and LSE when approximating an objective function for convex optimization. We show that MBCR produces estimates that are more suited to objective function approximation than those produced by CAP or LSE. 

\subsection{Synthetic Problems}\label{sec:synthetic}
In this subsection, we create a set of synthetic problems designed to show off the strength of convexity constraints. Problem 1 is highly non-linear and has moderate dimensionality (5); Problems 2 and 3 also have moderate dimensions in the covariate space (6 and 4, respectively), but both actually reside in a univariate subspace. 

\paragraph{Problem 1.}
Let $\x \in \R^5$.  Set $$y = \left(x_1+.5x_2+x_3\right)^2 - x_4+.25x_5^2 + \epsilon,$$
where $\epsilon \sim N(0,1)$. The covariates are drawn from a 5 dimensional standard Gaussian distribution, $N_5(0,I)$.

\paragraph{Problem 2.}
Let $\x \in \R^6$.  Set $$y = \left(x_1+x_2\right)^2 + \epsilon,$$
where $\epsilon \sim N(0,.5^2)$. The covariates are drawn from a 6 dimensional uniform distribution, $x_{j} \sim Unif[-1,1]$ for $j=1,\dots,6$.

\paragraph{Problem 3.}
Let $\x \in \R^4$.  Set 
\begin{align}\notag
y & = \left| \mathbf{a}^T\x \right| + \epsilon, & \mathbf{a}^T & = \left[0.8262,0.9305,1.6361,0.6072\right] 
\end{align}
where $\epsilon \sim N(0,1^2)$. The covariates are drawn from a 4 dimensional uniform distribution, $x_{j} \sim Unif[-4,4]$ for $j=1,\dots,4$.

\begin{table}
\caption{\label{tab:synthetic}Mean squared error on Problems 1, 2 and 3.}
\centering
\fbox{%
\begin{tabular}{l | r@{.}l | r@{.}l | r@{.}l | r@{.}l }
\multicolumn{9}{c}{Problem 1}\\\hline
Method & \multicolumn{2}{c |}{$n = 100$} & \multicolumn{2}{c |}{$n = 200$} & \multicolumn{2}{c |}{$n = 500$} & \multicolumn{2}{c }{$n = 1,000$}\\\hline
MBCR & {\bf1} &{\bf0373} & {\bf 0} & {\bf3679} & {\bf 0} & {\bf2784} & 0 & 2180 \\
 CAP & 1 & 6878 & 1 & 5336 & 0 & 3646 & {\bf 0} & {\bf1500}  \\
 LSE & 4 & 0174 & 1 & 4370 & 13 & 3398 & 1 & 8434 \\
 GP & 7 & 6612 & 6 & 2974 & 4 & 4793 & 3 & 5518  \\\hline
 \multicolumn{9}{c}{Problem 2}\\\hline
Method & \multicolumn{2}{c |}{$n = 100$} & \multicolumn{2}{c |}{$n = 200$} & \multicolumn{2}{c |}{$n = 500$} & \multicolumn{2}{c }{$n = 1,000$}\\\hline
MBCR & {\bf0} & {\bf 0943} & {\bf0} & {\bf0720} & {\bf0} & {\bf0155} &  0 & 0182 \\
 CAP & 0 & 1191 & 0 & 0887 & 0 & 0205 & {\bf 0} & {\bf 0129}  \\
 LSE & 4 & 6521 & 2 & 8926 & 1 & 9979 & 5 & 4998 \\
 GP & 0 & 3555 & 0 & 3932 & 0 & 3598 & 0 & 2174  \\\hline
 \multicolumn{9}{c}{Problem 3}\\\hline
Method & \multicolumn{2}{c |}{$n = 100$} & \multicolumn{2}{c |}{$n = 200$} & \multicolumn{2}{c |}{$n = 500$} & \multicolumn{2}{c }{$n = 1,000$}\\\hline
MBCR & {\bf 0 }& {\bf 1399} & {\bf 0} & {\bf 0775} & {\bf 0} & {\bf 0138} & {\bf 0} & {\bf 0102} \\
 CAP & 0 & 1886 & 0 & 1308 & 0 & 0192 & 0 & 0164  \\
 LSE & 4 & 7537 & 2 & 0210 & 1 & 4801 & 7 & 6638 \\
 GP & 2 & 0351 & 3 & 4649 & 2 & 9349 & 3 & 2026  \\
 \end{tabular}}
\end{table}

\subsubsection{Results.} 
On all of these problems, MBCR is compared to CAP, LSE and Gaussian Process priors~\citep{RaWi06}. Gaussian process priors are a Bayesian method that is widely and successfully used in regression and classification settings; we use the Matlab \texttt{gpml} package for implementation. All methods were implemented in Matlab; the least squares estimate (LSE) was found using the \texttt{cvx} optimization package. LSE took 5 to 6 minutes to run with 500 observations and 50 to 60 minutes to run with 1,000. The tolerance parameter for CAP was chosen through five-fold cross-validation. MBCR was implemented with component-specific variances. It was run for 1,000 iterations with the first 500 discarded as a burn-in.

Due to the highly constrained nature of the model and block updating, convergence of the sampler was extremely fast in all settings. The model was generally insensitive to the hyperparameter for the number of hyperplanes, $\lambda$; it was varied over three orders of magnitude and set to 20 for all tests. In lower, dimensions, however, choice of $\lambda$ was more important. Likewise, variance hyperparameters were tested over three orders of magnitude with little sensitivity. Distributions were not placed over the variance hyperparameters because of the delicate relationship between the proposal distributions and the hyperparameters. All mean hyperparameters were set to 0.

MBCR and CAP dramatically outperformed other methods on all of the problems. LSE had relatively poor performance although it includes convexity constraints. This is due to overfitting, particularly in boundary regions. MBCR and CAP performed comparably on Problems 2 and 3, which both reside in a univariate subspace of the general covariate space. However, MBCR outperformed CAP on the more complex Problem 1, particularly when there were few observations available.

\subsection{Objective Function Approximation for Stochastic Optimization}\label{sec:convex}

Stochastic optimization methods are used to solve optimization problems with uncertain outcomes. The traditional objective is to minimize expected loss. There are many problems in this class, ranging from stochastic search~\citep{Sp03} to sequential decision problems~\citep{SuBa98,Po07}. In this section, we study the use of convex regression to compute response surfaces. A response surface is an approximation of an objective function based on a collection of noisy samples. Once a response surface has been created, it is searched to estimate the minimizer or maximizer of a function. Convex representations are desirable. First, the resulting approximation will likely be closer to the true objective function than an unconstrained approximation. Second, and more importantly, the surrogate objective function is now convex as well and can be easily searched with a commercial solver.

Consider the following problem. We would like to minimize an unknown function $f(\x)$ with respect to $\x$ given $n$ noisy observations, $(\x_i,y_i)_{i=1}^n$, where $y_i = f(\x_i) + \epsilon_i$,  
\begin{equation}\label{eq:so}
\min_{\x \in \mathcal{X}} \E\left\{f(\x) \g (\x_i,y_i)_{i=1}^n\right\}.
\end{equation}

To solve Equation (\ref{eq:so}), we approximate $\E\left\{f(\x) \g (\x_i,y_i)_{i=1}^n\right\}$ with three different methods for regression: least squares, CAP and MBCR. Let $\hat{f}_n(\x)$ be the estimate of the mean function given $ (\x_i,y_i)_{i=1}^n$. Unlike CAP and LSE, MBCR is a Bayesian method; it places a distribution over functions rather than producing a single function estimate. Let $\hat{f}^{(m)}_n(\x)$ be a sample from the posterior; the Bayes estimate of the mean function can be approximated by the average of $M$ samples from the posterior,$$\hat{f}_n(\x) \approx \frac{1}{M} \sum_{m=1}^M  \hat{f}^{(m)}_n(\x).$$We demonstrate the empirical differences between the objective functions produced by MBCR, CAP and LSE by solving a small stochastic optimization problem.

\paragraph{Example.} Set
\begin{equation}\label{eq:optExample}
Y_i = \x_i \mathbf{Q} \x_i^T + \epsilon_i,\quad  \mathbf{Q} = 
\left[
\begin{array}{cc}
1  & 0.2   \\
 0.2 & 1  
\end{array}
\right], \quad \epsilon_i \sim N(0,0.1).
\end{equation}
The constraint set is $-1 \leq x_j \leq 1$ for $ j = 1,2$. Observations were sampled randomly from a uniform distribution, $\x_i \sim Uniform [-1,1]^2.$ We used LSE, CAP and MBCR to approximate the objective function. To examine the stability of these methods for objective function approximation, we sampled 100 observations 50 times for Equation (\ref{eq:optExample}). Approximations of the objective functions for one sample are shown in Figure \ref{fig:objective}.

\begin{figure}
\begin{center}
\includegraphics[width=3in,viewport= 70 200 560 650]{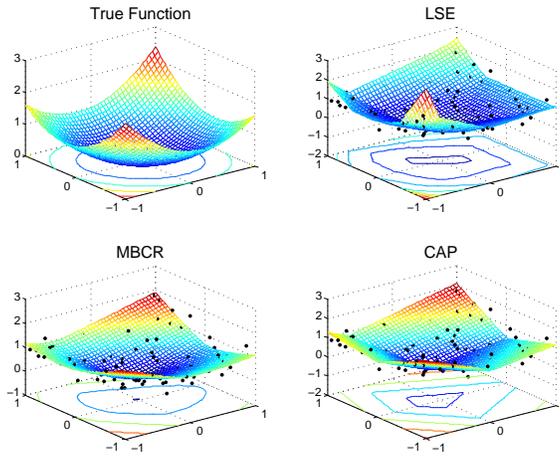}
\end{center}
\caption{Objective functions for LSE, CAP and MBCR for Equation (\ref{eq:optExample}) given 100 observations. Both LSE and CAP produce piecewise-linear functions; CAP produces a sparser function than LSE. MBCR averages over piecewise-linear functions to produce an estimate that is much closer to smooth.}
\label{fig:objective}
\end{figure}

\begin{figure}
\begin{center}
\includegraphics[width=3in,viewport= 110 220 500 580]{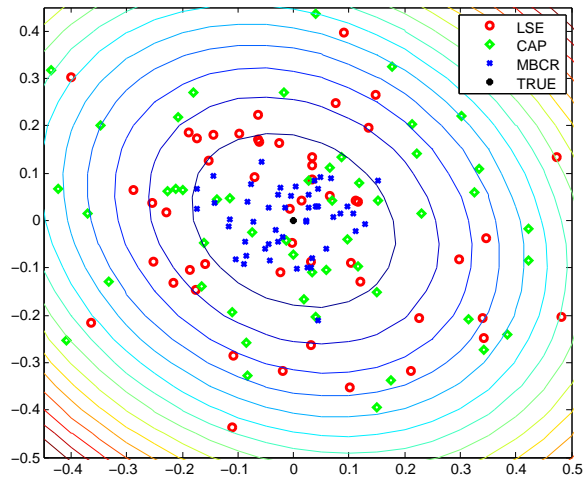}
\end{center}
\caption{Minima from the objective functions created by LSE, CAP and MBCR for Equation (\ref{eq:optExample}) given 100 observations; contours are from the true function. The observations were sampled 50 times; selections made when the objective function was approximated by MBCR are much more concentrated around the true minimum than those chosen using LSE or CAP.}
\label{fig:optMinima}
\end{figure}

\subsubsection{Results}We compared MBCR, CAP and LSE across 50 samples of 100 observations. The minima of piecewise planar models, like CAP and LSE, are on one of the vertices (or occasionally along one of the edges); this makes the minima of such models highly sensitive to model parameters such as number of hyperplanes and the value of their coefficients. MBCR, however, places a distribution over piecewise planar models. The Bayes estimate averages those models to produce something that is close to smooth and hence is relatively robust to observation design. Figure \ref{fig:objective} highlights these differences. The minima of both piecewise planar methods were sensitive to the observation design while the minima of MBCR proved more robust. Locations of minima are shown in Figure \ref{fig:optMinima}.

\subsubsection{Discussion}
Many methods for solving stochastic optimization problems, including response surface methods~\citep{BaMe06,Li10}, Q-learning~\citep{ErGeWe05} and approximate dynamic programming~\citep{Po07}, involve functional approximations that are then searched to find a solution that minimizes or maximizes the approximate reward. Current solution methods for these problems use either unconstrained regression methods~\citep{LaPa03,ErGeWe05} or additive approximations with univariate convex functions~\citep{PoRuTo04,NaPo09}. Robust multivariate convex regression methods could allow efficient solution of a broad set of stochastic optimization problems, inlcuding resource allocation, portfolio optimization and inventory management.

\section{Conclusions and Future Work}\label{sec:conclusions}
In this article, we introduced a novel fully Bayesian, nonparametric model for multivariate convex regression and showed strong posterior consistency along with convergence rates. We presented an efficient RJMCMC sampler for posterior inference. Our model was used to approximate objective functions for stochastic optimization and showed improvement over existing frequentist methods.

While this work represents a large advancement for convex regression, much remains to be done. First, we need to develop sampling methods that scale to large problems. Second, MBCR needs to be tested on a variety of stochastic optimization problems. Third, MBCR can be combined with other Bayesian methods to produce a class of semi-convex estimators.

Currently, the RJMCMC sampling method only scales well to moderate dimensionality and problem size:  its limits are about 8 to 10 dimensions and a few thousand observations. Approximate inference methods, such as variational Bayes, could allow MBCR to solve problems an order of magnitude larger. Implementation, however, is not a straightforward extension of existing methods. 

In stochastic optimization, MBCR is an extremely promising tool for value function approximation. Many solution methods for sequential decision problems include value function approximation, such as point-based value iteration~\citep{PiGoTh03}, fitted Q-iteration~\citep{ErGeWe05}, approximate dynamic programming~\citep{Po07}. All of these methods involve iterative searches of an approximate value function over sets of feasible actions where. In many problems, such as resource allocation, the value function is known to be convex. Robust multivariate convex regression methods would allow a wider variety of problems to be solved, including those with large action spaces and non-separable objective functions.

Perhaps the most intriguing feature of MBCR is that it is a Bayesian model and can easily be combined with other Bayesian models to produce estimators that are convex in some dimensions, but not all. For example, it is well known that consumer preferences for bundled products tend to be convex. However, it is likely that other covariates like consumer age, gender, income and education influence the preference function---and the function is not convex in these covariates. This set of functions could be well-modeled by a combination of MBCR and Bayesian mixture models like Dirichlet processes~\citep{Fe73,An74} or hierarchical Dirichlet processes~\citep{TeJoBe06}. Such flexible models would be of great value to an assortment of fields, including economics, operations research and reinforcement learning.

\section*{Acknowledgements}
This research was partially supported by grant R01ES17240 from the National Institute of Environmental Health Sciences (NIEHS) of the National Institutes of Health (NIH). Lauren A. Hannah is partially supported by the Duke Provost's Postdoctoral Fellowship.

\section*{Appendix A}\label{app:proofs}
Appendix A contains the proofs for Section \ref{sec:consistency}.

To show pointwise convergence, we use Theorems 1 to 3 of \citet{ChSc07}; they are condensed for this paper. For the fixed design case, let $Q_n$ be the empirical density of the design points, $Q_n(\x) = n^{-1} \sum_{i=1}^n \1_{\{\x_i\}}(\x).$ The empirical density is used to define the following neighborhood, $$W_{\epsilon,n} = \left\{ (f,\sigma) : \int \left| f(\x) - f_0(\x) \right| dQ_n(\x) < \epsilon, \, \left| \frac{\sigma}{\sigma_0} - 1 \right| < \epsilon\right\}.$$

\begin{thm}\label{thm:pointwiseFixed}\citep{ChSc07}
Let $\p_{f_0,\sigma_0}^{\infty}$ denote the joint conditional distribution of $\{Y_i\}_{i=1}^{\infty}$ given the covariates, assuming that $f_0$ is the true mean function and $\sigma_0^2$ is the true variance. If assumptions {\bf A1}, {\bf A2}, {\bf B1} and {\bf B2} are satisfied, then for every $\epsilon > 0$, $$\p_{f_0,\sigma_0}^{\infty} \Pi\left\{ (f,\sigma) \in W^C_{\epsilon,n} \, | \, Y_1,\dots,Y_n,\x_1,\dots,\x_n\right\} \rightarrow 0.$$
\end{thm}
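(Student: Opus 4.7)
The plan is to follow the Schwartz-style framework for posterior consistency, adapted to non-i.i.d.\ regression observations. Writing $\Pi(W^C_{\epsilon,n} \g Y_1,\dots,Y_n, \x_1,\dots,\x_n)$ via Bayes' rule gives a ratio whose numerator is $\int_{W^C_{\epsilon,n}} \prod_{i=1}^n \{p_{f,\sigma}(Y_i \g \x_i)/p_{f_0,\sigma_0}(Y_i \g \x_i)\} \, d\Pi(f,\sigma)$ and whose denominator is the same integral over all of $\Theta$. First I would use assumption \textbf{A1} to bound the denominator from below. Because the conditional likelihoods are Gaussian with variance bounded away from $0$ and $\infty$ (by \textbf{B2}) and because the regression-function deviation is controlled in sup-norm, an elementary calculation shows that the Kullback--Leibler divergences and KL variances between $p_{f,\sigma}(\cdot \g \x_i)$ and $p_{f_0,\sigma_0}(\cdot \g \x_i)$ are uniformly bounded by a quantity vanishing with $\delta$ on $B_\delta$. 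Combined with the positive prior mass from \textbf{A1}, Kolmogorov's SLLN for independent non-identically distributed summands (applied to the log-likelihood ratios with uniformly bounded variances) yields, for suitable $\eta > 0$, that $e^{n\eta}$ times the denominator tends to $\infty$ almost surely under $\p_{f_0,\sigma_0}^{\infty}$.

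For the numerator I would split the integration over $W^C_{\epsilon,n}$ into $W^C_{\epsilon,n} \cap \Theta_n$ and $W^C_{\epsilon,n} \cap \Theta_n^C$. By \textbf{A2}, $\Pi(\Theta_n^C) \leq C_1 e^{-c_1 n}$, and since the integrand has $\p_{f_0,\sigma_0}^{\infty}$-expectation $1$, the contribution from $\Theta_n^C$ decays exponentially in expectation. On the sieve, the goal is to construct exponentially consistent tests separating $W^C_{\epsilon,n} \cap \Theta_n$ from $(f_0,\sigma_0)$. Since $\Theta_n$ consists of functions with sup-norm and partial-derivative sup-norms bounded by $M_n = \mathcal{O}(n^\alpha)$, the class of admissible $f$ is uniformly Lipschitz on $[0,1]^p$ and therefore totally bounded in $\|\cdot\|_\infty$, with log-covering number of order $(M_n/\rho)^p$ at scale $\rho$. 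Covering $\Theta_n$ by sup-norm balls of a suitably chosen radius and $[\underline{\sigma},\bar{\sigma}]$ by a finite grid, I would build Neyman--Pearson-type tests on each piece using the empirical $L_1$ discrepancy $\int |f - f_0| \, dQ_n$ together with quadratic statistics in the residuals; Hoeffding/Bernstein inequalities, applicable because the residuals are sub-Gaussian and observations are independent, deliver Type I and Type II errors of order $e^{-\kappa n \epsilon^2}$ uniformly on each piece, and a union bound over the cover patches these into a single globally consistent test sequence.

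The main obstacle is the uniform construction of these tests in the non-i.i.d.\ fixed-design setting: the exponential test rate $\kappa n \epsilon^2$ must dominate the logarithm of the number of sieve pieces. Since $M_n = \mathcal{O}(n^\alpha)$ with $\alpha < 1$, the $\log$-covering number is of order $n^{\alpha p} \log n$, which is beaten by $\kappa n \epsilon^2$ once $n$ is large; the lower bound $\alpha > 1/2$ is needed in the companion prior-mass condition to guarantee the sieve eventually reaches neighborhoods of $f_0$. Combining the exponential upper bound on the expected numerator with the exponential lower bound on the denominator, and applying Markov's inequality together with Borel--Cantelli along the path of the data, yields $\Pi(W^C_{\epsilon,n} \g Y_1,\dots,Y_n, \x_1,\dots,\x_n) \to 0$ in $\p_{f_0,\sigma_0}^{\infty}$-probability, which is the content of the displayed claim.
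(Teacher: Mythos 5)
First, a point of reference: the paper does not prove this statement at all. It is imported verbatim (in ``condensed'' form) from \citet{ChSc07}, and the paper's actual work in Appendix A is to verify that the MBCR prior satisfies the hypotheses {\bf A1} and {\bf A2} (Lemmas \ref{prop:priorKL} and \ref{lem:priorSieve}). So what you have written is a reconstruction of a cited result rather than something to compare against an in-paper argument. That said, your architecture is the right one and matches what Choi and Schervish actually do: lower-bound the denominator of the posterior ratio using prior positivity on sup-norm (hence Kullback--Leibler and KL-variance) neighborhoods together with a strong law for independent, non-identically distributed log-likelihood ratios; discard $\Theta_n^C$ using the exponential prior bound; and handle $W_{\epsilon,n}^C \cap \Theta_n$ with exponentially consistent tests.

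The gap is in the test construction. You cover $\Theta_n$ by sup-norm balls and union-bound, asserting that the log-covering number, which you correctly identify as being of order $(M_n/\rho)^p \asymp n^{\alpha p}$ at a fixed scale $\rho$, ``is beaten by $\kappa n \epsilon^2$ once $n$ is large.'' That is false whenever $\alpha p \geq 1$, i.e.\ for every $p \geq 2$ given the standing requirement $\alpha > 1/2$ in {\bf A2} --- precisely the multivariate regime this paper needs. A naive sup-norm cover of the sieve therefore cannot be patched into a single test with exponentially small errors by a union bound; the actual construction in \citet{ChSc07} is more delicate, building tests from the design points directly and using the derivative bound to transfer a test against a single alternative to a whole neighborhood without paying the full metric entropy of $\Theta_n$. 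A second, smaller issue: you conclude convergence in $\p_{f_0,\sigma_0}^{\infty}$-probability, whereas the result the paper relies on downstream (see the almost-sure qualification in Section \ref{sec:theory}) is almost-sure convergence of the posterior mass; the Borel--Cantelli step you invoke would in fact deliver the almost-sure version, so you should state that as the conclusion once the exponential bounds are actually in hand.
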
  

For the random design case, let $Q$ be the density of the random design points. Let $$U_{\epsilon} = \left\{ (f,\sigma) : \inf \{ \epsilon > 0 : Q(\{ \x : |f(\x) - f_0(\x)| > \epsilon \} ) < \epsilon \}, \,  \left| \frac{\sigma}{\sigma_0} - 1 \right| < \epsilon\right\}$$be the set of neighborhoods based on the in-probability metric.
\begin{thm}\label{thm:pointwiseRandom}\citep{ChSc07}
Let $\p_{f_0,\sigma_0}^{\infty}$ denote the joint conditional distribution of $\{Y_i\}_{i=1}^{\infty}$ given the covariates, assuming that $f_0$ is the true mean function and $\sigma_0^2$ is the true variance. If assumptions {\bf A1}, {\bf A2}, {\bf B1} and {\bf B2} are satisfied, then for every $\epsilon > 0$, $$\p_{f_0,\sigma_0}^{\infty} \Pi\left\{ (f,\sigma) \in U^C_{\epsilon} \, | \, Y_1,\dots,Y_n,\X_1,\dots,\X_n\right\} \rightarrow 0.$$
\end{thm}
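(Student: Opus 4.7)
The plan is to follow the general Schwartz/Choi--Schervish template for posterior consistency in non-i.i.d. regression, with the main additional work being the construction of exponentially consistent tests for the in-$Q$-probability pseudometric defining $U_\epsilon$.

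First, I would translate the uniform-norm prior positivity in \textbf{A1} into Kullback--Leibler positivity. For our model, when $\|f - f_0\|_\infty < \delta$ and $|\sigma/\sigma_0 - 1| < \delta$, a direct Gaussian calculation shows the per-observation KL divergence and KL variance are bounded by a constant multiple of $\delta^2$, uniformly over the (bounded) design. Hence $\Pi(B_\delta) > 0$ implies prior positivity on arbitrarily small KL neighborhoods. Combined with the standard Fatou/martingale argument (the non-i.i.d. extension of Schwartz's lemma used in \citet{ChSc07}), this gives the lower bound
\begin{equation}\notag
\int \prod_{i=1}^n \frac{p_{f,\sigma}(Y_i \mid \X_i)}{p_{f_0,\sigma_0}(Y_i \mid \X_i)} \, d\Pi(f,\sigma) \;\geq\; e^{-n\beta}
\end{equation}
eventually, almost surely $\p_{f_0,\sigma_0}^\infty$, for every $\beta > 0$. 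This handles the denominator of the posterior.

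Second, I would use the sieve $\Theta_n$ from \textbf{A2} to split $\Pi(U_\epsilon^C \mid \cdot)$ into a piece on $\Theta_n^C$ and a piece on $\Theta_n \cap U_\epsilon^C$. Combined with the denominator bound, $\Pi(\Theta_n^C)\le C_1 e^{-c_1 n}$ immediately kills the first piece. For the second piece, the key step is constructing, for each $\epsilon > 0$, tests $\phi_n$ with $\E_{f_0,\sigma_0}\phi_n \to 0$ exponentially and $\sup_{(f,\sigma)\in\Theta_n\cap U_\epsilon^C}\E_{f,\sigma}(1-\phi_n)$ decaying exponentially. The random-design structure is what makes this possible under the in-$Q$-probability metric: if $(f,\sigma)\in U_\epsilon^C$ and $|\sigma/\sigma_0-1|<\epsilon$, then $Q\{\x : |f(\x)-f_0(\x)|>\epsilon\} \ge \epsilon$, so for i.i.d.\ $\X_i\sim Q$ the empirical frequency of such $\x_i$'s concentrates at $\ge \epsilon/2$ by Hoeffding, and on those points the likelihood ratio has a systematic signal. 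One then builds a local test (essentially a sign/thresholded-residual test) and extends it across the sieve via a covering argument: since $\Theta_{1n}$ consists of functions uniformly bounded by $M_n$ with partial derivatives bounded by $M_n$, its $\eta$-bracketing entropy under $\|\cdot\|_\infty$ grows only polynomially in $M_n/\eta$, i.e.\ at rate $n^{\alpha\cdot p}\log(1/\eta)$, which is $o(n)$. Union-bounding the local tests over this covering yields the required global exponential consistency, and the $\sigma$ direction is handled by an analogous variance test on a finite grid of $[\underline\sigma,\bar\sigma]$.

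The main obstacle I foresee is precisely this test construction, because the in-$Q$-probability metric is a Ky~Fan--type pseudometric rather than an $L_p$ distance, so one cannot directly invoke off-the-shelf Hellinger-metric tests. The delicate point is showing that uniform exponential power is achievable over the whole sieve slice $\Theta_n\cap U_\epsilon^C$ with the sieve growing like $n^\alpha$; this requires balancing the covering-number growth against the Hoeffding rate $e^{-cn\epsilon^2}$, which is why \textbf{A2} insists on $M_n = \mathcal O(n^\alpha)$ with $\alpha < 1$. Once the tests are in place, the classical decomposition
\begin{equation}\notag
\Pi(U_\epsilon^C \mid \mathcal D_n) \;\le\; \phi_n + e^{n\beta}\Pi(\Theta_n^C) + e^{n\beta}\!\!\int_{\Theta_n\cap U_\epsilon^C}\!\!\!\prod_i\frac{p_{f,\sigma}(Y_i\mid \X_i)}{p_{f_0,\sigma_0}(Y_i\mid \X_i)}(1-\phi_n)\,d\Pi
\end{equation}
(with $\beta$ chosen smaller than the test exponents and $c_1$) gives convergence to zero $\p_{f_0,\sigma_0}^\infty$-a.s., completing the proof.
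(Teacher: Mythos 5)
You should first be aware that the paper contains no proof of this statement: Theorem \ref{thm:pointwiseRandom} is quoted, in condensed form, from \citet{ChSc07}, and the appendix only verifies that the prior satisfies {\bf A1} and {\bf A2} so that the citation applies. So the relevant comparison is between your sketch and the argument in \citet{ChSc07} itself. On that score your template is the right one and matches theirs: Kullback--Leibler positivity deduced from the sup-norm neighborhood $B_\delta$ via the Gaussian identity $K_i(f_0,f)=\tfrac12\bigl[\log(\sigma^2/\sigma_0^2)+\sigma_0^2/\sigma^2-1+(f(\x_i)-f_0(\x_i))^2/\sigma^2\bigr]$, the martingale/Fatou lower bound on the denominator, the sieve split using $\Pi(\Theta_n^C)\le C_1e^{-c_1n}$, tests exploiting the fact that $(f,\sigma)\in U_\epsilon^C$ with $|\sigma/\sigma_0-1|<\epsilon$ forces $Q\{\x:|f(\x)-f_0(\x)|>\epsilon\}\ge\epsilon$ so that a positive fraction of the i.i.d.\ $\X_i$ carry signal, and the classical three-term decomposition at the end.

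The step that does not close as written is the union bound over the sieve. You assert that the $\eta$-entropy of $\Theta_{1n}$ under $\|\cdot\|_\infty$ is of order $n^{\alpha p}\log(1/\eta)$ and hence $o(n)$. The Kolmogorov--Tikhomirov bound for functions on $[0,1]^p$ with $\|f\|_\infty<M_n$ and $\|\partial f/\partial x_j\|_\infty<M_n$ gives $\log N(\eta,\Theta_{1n},\|\cdot\|_\infty)\asymp(M_n/\eta)^p=n^{\alpha p}\eta^{-p}$, and since {\bf A2} only requires $\tfrac12<\alpha<1$, this is $o(n)$ only when $\alpha p<1$, i.e.\ essentially only for $p=1$. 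For $p\ge2$ your covering argument produces more local tests than the Hoeffding exponent $e^{-cn\epsilon^2}$ can absorb, so ``balancing the covering-number growth against the Hoeffding rate,'' which you correctly identify as the delicate point, actually requires more than {\bf A2} gives you. A partial repair specific to this model is to cover only $\Theta_n\cap\mathrm{supp}(\Pi)$, whose elements are convex, so that Bronshtein's theorem improves the entropy to $(M_n/\eta)^{p/2}$ and the requirement relaxes to $\alpha<2/p$; but even then $\alpha$ cannot lie in $(\tfrac12,1)$ uniformly in $p$, so you must either impose a dimension restriction or reproduce the finer test construction of \citet{ChSc07} rather than a raw sup-norm cover of the full Lipschitz ball.
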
  

We now show that the prior satisfies assumptions {\bf A1} and {\bf A2} for Theorems \ref{thm:pointwiseFixed} and \ref{thm:pointwiseRandom}.

\begin{lem}\label{prop:priorKL}
For every $\delta > 0$, the prior $\Pi$ from {\bf B2} and {\bf B3} or {\bf B4} puts positive measure on the neighborhood $$B_{\delta} = \left\{ (f,\sigma^2) : || f - f_0 ||_{\infty} < \delta, \, \left| \frac{\sigma}{\sigma_0} - 1\right| < \delta \right\}.$$
\end{lem}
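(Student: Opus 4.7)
The plan is to use the factorization $\Pi = \Pi_\sigma\otimes\Pi_K\otimes\prod_k\Pi_\theta$ to reduce positivity on $B_\delta$ to two essentially independent pieces: one on $\sigma$ close to $\sigma_0$, which is immediate from \textbf{B2} (since $\sigma_0$ lies in the interior of the support of the non-atomic $\Pi_\sigma$); and one on $f$ uniformly close to $f_0$, where the real work sits.

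The first substantive step would be a deterministic approximation: exhibit a specific piecewise-linear convex $g(x)=\max_{k\le K^*}(\alpha^*_k+\beta_k^{*T}x)$ with $\|g-f_0\|_\infty<\delta/2$. Because $f_0$ is convex and $C^1$ on the compact set $\mathcal{X}=[0,1]^p$, its gradient is uniformly continuous, and I would pick a grid $x^*_1,\ldots,x^*_{K^*}\subset\mathcal{X}$ fine enough that the tangent hyperplanes $\beta^*_k=\nabla f_0(x^*_k)$, $\alpha^*_k=f_0(x^*_k)-\beta_k^{*T}x^*_k$ yield a piecewise-linear lower envelope within $\delta/2$ of $f_0$ in sup norm. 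This is classical: tangent planes always lie at or below $f_0$ with equality at the base point, and uniform continuity of $\nabla f_0$ converts mesh fineness into a uniform upper-gap bound.

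Next I would convert the approximation into prior positivity. The elementary Lipschitz estimate
\[
\sup_{x\in\mathcal{X}}\Bigl|\max_k(\alpha_k+\beta_k^Tx)-\max_k(\alpha^*_k+\beta_k^{*T}x)\Bigr|\le \max_k\bigl(|\alpha_k-\alpha^*_k|+p\,\|\beta_k-\beta^*_k\|_\infty\bigr)
\]
shows that for a sufficiently small $\eta>0$, whenever $\|(\alpha_k,\beta_k)-(\alpha^*_k,\beta^*_k)\|<\eta$ for every $k\le K^*$, one has $\|f-g\|_\infty<\delta/2$ and therefore $\|f-f_0\|_\infty<\delta$. Under \textbf{B3}, $\Pi_\theta$ is a nondegenerate Gaussian on $\mathbb{R}^{p+1}$, so every Euclidean ball has strictly positive mass and the product over the $K^*$ hyperplanes has strictly positive probability. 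Combining with $\Pi_K(K^*)=e^{-\lambda}\lambda^{K^*-1}/(K^*-1)!>0$ (Poisson) and the $\Pi_\sigma$-factor from \textbf{B2} gives $\Pi(B_\delta)>0$ in the untruncated case.

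The main technical obstacle is assumption \textbf{B4}, where $\Pi_\theta$ restricts to $\Omega=\{(\alpha,\beta):\max\{\alpha,\beta_1,\ldots,\beta_p\}\le V\}$. For the product argument to survive, each target $(\alpha^*_k,\beta^*_k)$ must lie strictly in the interior of $\Omega$, so that a small Euclidean ball around it still has positive truncated mass. I would verify this from the bounds built into the assumptions: $|\beta^*_{k,j}|=|\partial_jf_0(x^*_k)|<L$ by the definition of $L$, while $|\alpha^*_k|$ is controlled by $\|f_0\|_\infty$ from \textbf{B1} and $L\cdot\mathrm{diam}(\mathcal{X})$; the condition $V>L/\sqrt{p}$ (taken sufficiently large when needed, exactly as in the rate theorem) leaves enough slack to ensure each $\max\{\alpha^*_k,\beta^*_{k,1},\ldots,\beta^*_{k,p}\}<V$ strictly. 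Once this is in place, the truncated Gaussian density is bounded below by a positive constant on the required balls and the rest of the argument is identical to the untruncated case.
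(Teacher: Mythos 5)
Your proof is correct and follows essentially the same route as the paper's: factor the prior, handle $\sigma$ via \textbf{B2}, build a piecewise-linear approximant from a fine mesh of supporting hyperplanes of $f_0$, show the sup-norm bound is stable under small perturbations of $(\alpha_k,\beta_k)$, and conclude from positivity of $\Pi_K$ at the required $K$ and of $\Pi_\theta$ on small boxes. Your explicit check that the tangent-plane targets lie in the interior of the truncation region $\Omega$ under \textbf{B4} is actually a point the paper's own proof glosses over, so that extra care is welcome rather than a deviation.
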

\begin{proof}Fix $\delta > 0$. Break $B_{\delta}$ into two parts, $\Pi(B_{\delta}(\beta)) = \left\{ f : || f - f_0 ||_{\infty} < \delta\right\}$,  and $\Pi(B_{\delta}(\sigma^2)) = \left\{\left| \frac{\sigma}{\sigma_0} - 1\right| < \delta \right\}.$ Under a truncated inverse gamma prior,
\begin{equation}\notag
\Pi(B_{\delta}(\sigma^2)) = \Pi((\sigma_0^2(1-\delta)^2,\sigma_0^2(1+\delta)^2)) > 0.
\end{equation}

To show prior positivity on $\Pi(B_{\delta}(\beta))$, we create a sufficiently fine mesh over $\mathcal{X}$. On each section of the mesh, we show that there exists a collection of hyperplanes that 1) do not intersect with $f_0$, and 2) have an $\ell_{\infty}$ distance from $f_0$ of less than $\delta$ in that section. Since $f_0$ is bounded on $\mathcal{X}$, it is Lipschitz continuous with parameter $L$.  A mesh size parameter $\gamma > 0$, which depends on $\delta$, can be found to make a $\gamma$ mesh over $\mathcal{X}$ with the following requirements.

Number regions $r = 1,\dots, R$; call the subsets of the covariate space defined by the regions $M_r^{\gamma}$. Because $f_0$ is Lipschitz continuous, an $\eta > 0$ can be found such that for every region $r$, one can find $\alpha_{r^*}$ and $\beta_{r^*}$ where for every $\alpha_r \in [\alpha_{r^*} - \eta, \alpha_{r^*} + \eta]$ and $\beta_r \in [\beta_{r^*} - \eta \1, \beta_{r^*} + \eta \1]$, 
\begin{align}\notag
		\alpha_r + \beta_r^T \x & < f_0(\x), & f_0(\x) - \alpha_r - \beta_r^T \x & < \delta
\end{align}for every $x \in M_r^{\gamma}$.
	
We create a function $f_{\delta}(\x)$ to approximate $f_0$ by taking the maximum over the set of $R$ hyperplanes; using the above, we can bound the distance between $f_0$ and $f_{\delta}$,
\begin{align}\notag
\sup_{\x \in \mathcal{X}} ||f_0(\x) - f_{\delta}(\x) ||_{\infty} & = \sup_{\x \in \mathcal{X}} ||f_0(\x) -\max_{r \in \{1,\dots,R\}} \alpha_r + \beta_r^T \x||_{\infty},\\\notag
& \leq \max_{r = 1,\dots,R} \sup_{\x \in M^{\gamma}_r} ||f_0(\x) - \alpha_r - \beta_r^T\x||_{\infty},\\\notag
& < \delta.
\end{align}
To complete the proof, we note that $\Pi(K = R) > 0$ and $\Pi([\alpha_{r^*} - \eta, \alpha_{r^*} + \eta], [\beta_{r^*} - \eta \1, \beta_{r^*} + \eta \1]) > 0 $ for $r = 1,\dots, R$.\qed \end{proof}

\begin{lem}\label{lem:priorSieve}
Define the prior $\Pi$ as in {\bf B2} and {\bf B3}. There exist constants $C_1>0$ and $c_1 > 0$ such that $\Pi(\Theta^C_n) \leq C_1 e^{-c_1n}$.
\end{lem}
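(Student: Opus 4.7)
The plan is to reduce the function-level event $\{f \notin \Theta_{1n}\}$ to a parameter-level event about the hyperplane coefficients, then bound that event via Gaussian tails and the finite mean of the Poisson prior on $K$.

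First, I would translate the sieve condition. Since $\mathcal{X} = [0,1]^p$ and $f(\x) = \max_{k \leq K}(\alpha_k + \beta_k^T \x)$, the triangle inequality gives $\|f\|_\infty \leq \max_k\bigl(|\alpha_k| + \sum_{j=1}^p |\beta_{k,j}|\bigr)$, while almost-everywhere differentiation yields $\partial f/\partial x_j(\x) = \beta_{k(\x),j}$ where $k(\x)$ is the (generically unique) index attaining the max, so $\|\partial f/\partial x_j\|_\infty \leq \max_k |\beta_{k,j}|$. Therefore a sufficient condition for $f \in \Theta_{1n}$ is that every coordinate of every $\theta_k = (\alpha_k, \beta_k)$ satisfy $|\cdot| < M_n/(p+1)$, and hence
$$\Theta_n^C \subset \bigcup_{k=1}^{K} \bigcup_{i=1}^{p+1} \bigl\{|\theta_{k,i}| \geq M_n/(p+1)\bigr\}.$$

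Next, I would apply a union bound conditional on $K$ and use Gaussian marginal tails. Under B3, each marginal of $\theta_k \sim N_{p+1}(\mu_{\alpha,\beta}, V_{\alpha,\beta})$ is univariate Gaussian with mean and variance bounded by constants depending only on $\mu_{\alpha,\beta}, V_{\alpha,\beta}, p$; thus there exist constants $c, C > 0$ such that for $n$ large, $\Pi_\theta(|\theta_{k,i}| \geq M_n/(p+1)) \leq C e^{-c M_n^2}$ uniformly in $k, i$. The conditional union bound then yields
$$\Pi(\Theta_n^C \mid K) \leq K(p+1) \, C e^{-c M_n^2}.$$
Integrating against $\Pi_K$ and using $\E[K] = \lambda + 1 < \infty$ (so no Poisson tail control is needed at this step) gives
$$\Pi(\Theta_n^C) \leq C(p+1)(\lambda + 1)\, e^{-c M_n^2}.$$

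Finally, since $M_n = \mathcal{O}(n^\alpha)$ with $\alpha > 1/2$, we have $M_n^2 \geq n$ for all sufficiently large $n$, so the right-hand side is bounded by $C_1 e^{-c_1 n}$ for suitable constants, as required. The only subtle point is making sure the Gaussian tail constants are genuinely uniform in $k$; this is immediate because the prior on the $\theta_k$'s is i.i.d., so the constants depend only on the fixed hyperparameters $(\mu_{\alpha,\beta}, V_{\alpha,\beta})$ and on $p$. No tail control on $K$ is needed beyond finiteness of $\E[K]$, which is what makes the argument short.
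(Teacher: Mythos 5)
Your proof is correct and follows essentially the same route as the paper's: reduce $\Theta_{1n}^C$ to a union of coordinate-wise tail events for the hyperplane parameters, apply a union bound conditional on $K$, control each event by a Gaussian tail of order $e^{-cM_n^2}$, and integrate out $K$ using only the finiteness of $\E[K]$ under the Poisson prior. The only differences are cosmetic (threshold $M_n/(p+1)$ versus the paper's $M_n/(2\sqrt{p})$, and the paper states the final bound as $C_1 e^{-c_1 n^{2\alpha}}$, which is stronger than but implies the claimed $C_1 e^{-c_1 n}$).
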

\begin{proof}
Without loss of generality, assume $\mathcal{X} = [0,1]^p$. Note that 
\begin{align}\notag
\Theta_{1n}^c & = \Theta\setminus \left\{||f||_{\infty} < M_n, \, \left| \left| \frac{\partial}{\partial x_j} f \right| \right|_{\infty} < M_n, \, j = 1,\dots,p\right\},\\\label{eq:subset}
& \subseteq \bigcup_{k=1}^{\infty} \bigcup_{j=1}^k \bigcup_{\ell=1}^p \left\{f(\cdot; \theta) : K = k,  |\beta_{j,\ell}| \geq \frac{M_n}{2\sqrt{p}}\right\}\\\notag
& \quad \quad \quad \bigcup \left\{f(\cdot; \theta) : K = k,  |\alpha_{j}| \geq \frac{M_n}{2\sqrt{p}}\right\}.
\end{align}Taking the probability of the right hand side of Equation (\ref{eq:subset}),
\begin{align}\notag
\Pi(\Theta_{1n}^C) & \leq \sum_{k = 1}^{\infty} \Pi_{K} (K = k) \sum_{j=1}^k\left\{\Pi \left(|\alpha_j| \geq \frac{M_n}{2\sqrt{p}}\right)+ \sum_{\ell = 1}^p \Pi\left(|\beta_{j,\ell}| \geq \frac{M_n}{2\sqrt{p}}\right)\right\},\\\notag
& \leq 2 \E_{\Pi}[K] (p+1) \int_{c_0M_n}^{\infty}\frac{1}{\sqrt{2\pi }} e^{-\frac{1}{2} x^2}dx,\\\notag
& \leq C_1 e^{-c_1 n^{2\alpha}}. 
\end{align}\qed \end{proof}

Note that under the bounded prior assumption {\bf B4}, $\Pi(\Theta_n^c) = 0$ for sufficiently large $n$. Theorems \ref{thm:L1Fixed} and \ref{thm:L1Random} follow directly from Theorems 4 and 6, respectively, of \citet{ChSc07} and Theorems \ref{thm:pointwiseFixed} and \ref{thm:pointwiseRandom}. In the random design case, $L_1$ convergence is equivalent to in-probability convergence under assumptions {\bf B1} and {\bf B6}; the fixed design case requires more care. See \citet{ChSc07} for details.

\section*{Appendix B}\label{app:rateProofs}
Appendix B contains the proofs for Section \ref{sec:rate}. Theorem \ref{thm:fixedRate} relies on verifying the conditions of Theorem 3 of \citet{GhVa07b},
\begin{thm}[\citet{GhVa07b}]\label{thm:GhVa07b}
Let $\p_{f}^n$ be a product measure and $d_n$ a semimetric and let $\Theta$ be the space of all $\{K,\alpha,\beta\}$-tuples with positive measure under $\Pi$. Suppose that for a sequence $\epsilon_n \rightarrow 0$ such that $n\epsilon_n^2$ is bounded away from zero, all sufficiently large $j$ and sets $\Theta_n \subset \Theta$, the following conditions hold:
\begin{enumerate}[(i)]
	\item $\sup_{\epsilon > \epsilon_n} \log N(\epsilon/18, \{f \in \Theta_n: d_n(f,f_0) < \epsilon\},d_n) \leq n\epsilon_n^2;$
	\item There exist tests $\Phi_n$ such that $\E_{f_0}^n \Phi_n \leq e^{-\frac{1}{2}nd_n^2(f_0,f_1)}$ and $\E_{f}^n (1 - \Phi_n) \leq e^{-\frac{1}{2}nd_n^2(f_0,f_1)}$ for all $f \in \Theta$ such that $d_n(f,f_1) \leq \frac{1}{18}d_n(f_0,f_1);$
	\item $\frac{\Pi(\Theta_n^C) }{\Pi(B_n^*(f_0,\epsilon_n))} = o\left(e^{-2n\epsilon_n^2}\right);$
	\item $\frac{\Pi(f \in \Theta_n \, : \, j\epsilon_n < d_n(f,f_0) \leq 2j\epsilon_n) }{\Pi(B_n^*(f_0,\epsilon_n))} \leq e^{n\epsilon_n^2j^2/4},$
\end{enumerate}where $B_n^*(f_0,\epsilon_n) = \left\{ f \in \Theta \, : \, \frac{1}{n} \sum_{i=1}^n K_i(f_0,f) \leq \epsilon_n^2, \ \frac{1}{n} \sum_{i=1}^n V_i(f_0,f) \leq C \epsilon_n^2\right\}$. Then, $\p_{f_0}^{\infty} \Pi(f \, : \, d_n(f,f_0) \geq H_n \epsilon_n \g (X_i,Y_i)_{i=1}^n ) \rightarrow 0$ for every $H_n \rightarrow \infty$.
\end{thm}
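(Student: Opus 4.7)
The plan is to apply Theorem~B.1 (Ghosal--van der Vaart) with $d_n = \|\cdot\|_n$ and $\epsilon_n = (\log n)^{-1} n^{-1/(d+2)}$, and verify its four hypotheses for the MBCR prior. The key structural input, beyond the already-established KL positivity argument of Lemma~A.2, is Assumption {\bf B8}: since $f_0 = g_0 \circ \A$ with $g_0$ convex and $C^1$ on the compact set $\A\mathcal{X} \subset \R^d$, all approximation-theoretic quantities will scale with the intrinsic dimension $d$ rather than the ambient $p$.

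The sieve I would take is
$$\Theta_n = \Bigl\{f(\x) = \max_{1 \leq k \leq K} \alpha_k + \beta_k^T \x \, : \, K \leq K_n,\ (\alpha_k,\beta_k) \in \Omega \Bigr\}, \qquad K_n = \bigl\lfloor n^{d/(d+2)}/(\log n)^3 \bigr\rfloor,$$
with $\Omega$ from {\bf B4}. For condition (i), a max-affine function on the bounded set $\mathcal{X}$ with bounded parameters is Lipschitz (in $\|\cdot\|_\infty$, and thus in $d_n$) in its $K(p+1)$ parameters with constant depending only on $V$ and $\mathrm{diam}(\mathcal{X})$, so $\log N(\epsilon_n/18,\Theta_n,d_n) \lesssim K_n(p+1)\log(V/\epsilon_n)$, which is $\lesssim n\epsilon_n^2$ by the choice of $K_n$. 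Condition (ii) (tests) is standard for Gaussian regression with known variance ({\bf B7}) via Birg\'e--Le Cam type arguments applied to the empirical $L_2$ metric. For the numerator in condition (iii), $\Pi(\Theta_n^c) = \Pi_K(K > K_n) \leq e^{-c K_n \log K_n}$ under the Poisson prior, and {\bf B4} makes the tail in the $(\alpha,\beta)$ coordinates vanish outright.

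The heart of the proof is the lower bound on $\Pi(B_n^*(f_0,\epsilon_n))$, which appears in the denominators of both (iii) and (iv). Here I would invoke a Bronshtein-type result: any bounded convex $C^1$ function on $[0,1]^d$ admits a uniform approximation within $\eta$ by the maximum of $R_\eta \lesssim \eta^{-d/2}$ affine functions. Applying this to $g_0$ on $\A\mathcal{X}$ and lifting every affine piece $\tilde\alpha_k + \tilde\beta_k^T \mathbf{z}$ on $\R^d$ to $(\alpha_k,\beta_k) = (\tilde\alpha_k,\A^T\tilde\beta_k)$ on $\R^p$ produces a max-affine approximation of $f_0$ on $\mathcal{X}$ with the same error $\eta$, using only $R_\eta$ hyperplanes. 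Setting $\eta \asymp \epsilon_n$ and thickening each lifted hyperplane to a small $\epsilon_n$-box of parameters gives
$$\Pi(B_n^*(f_0,\epsilon_n)) \;\geq\; \Pi_K(K = R_{\epsilon_n}) \cdot \bigl(c\,\epsilon_n^{p+2}\bigr)^{R_{\epsilon_n}} \;\geq\; \exp\!\bigl(-C\,\epsilon_n^{-d/2}\log(1/\epsilon_n)\bigr),$$
and the particular rate $\epsilon_n^{-1} = \log(n) n^{1/(d+2)}$ is exactly the one that makes $\epsilon_n^{-d/2}\log(1/\epsilon_n) = o(n\epsilon_n^2)$. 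Condition (iv) then follows by sandwiching: the numerator is controlled by the $\epsilon_n$-covering number of $\Theta_n$ from Step (i), and the denominator by the above bound.

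The main obstacle is Step~(iii)/(iv)'s prior-mass bound, and specifically the approximation rate for convex $C^1$ functions by max-affine functions in \emph{$d$} variables --- this is the only place where the intrinsic dimension, rather than $p$, enters, so it is what drives the conclusion from the ambient-$p$ rate to the subspace rate $n^{-1/(d+2)}$. A related technical subtlety is that the lifted hyperplanes $(\tilde\alpha_k, \A^T\tilde\beta_k)$ must lie in the truncation set $\Omega$ of {\bf B4}, which is why the statement requires $V$ sufficiently large (since $\|\A^T\tilde\beta_k\|_\infty$ is controlled by $\|\A\|_{\infty} \|\nabla g_0\|_\infty$, any $V \geq C(\A) \cdot L / \sqrt{p}$ suffices).
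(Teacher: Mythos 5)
The quoted statement is itself imported from \citet{GhVa07b} and is not proved in the paper; what you have actually written is a proof of Theorem \ref{thm:fixedRate} by verifying conditions {\it (i)}--{\it (iv)}, which is the content of Appendix B, so that is what I compare against. Your skeleton matches the paper's (sieve entropy, Birg\'e-type tests for {\it (ii)}, a prior-mass lower bound on $B_n^*$ driving {\it (iii)} and {\it (iv)}, with the intrinsic dimension entering only through computations in the transformed space $\A\mathcal{X}$), but you diverge at both technical steps. For the sieve you truncate the number of hyperplanes at $K_n$ and use a parametric covering count $K_n(p+1)\log(V/\epsilon_n)$; the paper instead takes $\Theta_n=\Theta_{1n}$, defined by uniform bounds on $f$ and its partial derivatives, which under {\bf B4} contains the entire support of $\Pi$ (so $\Pi(\Theta_n^c)=0$) and whose entropy is controlled by Bronshtein's $\epsilon^{-d/2}$ bound (Lemma \ref{lem:metricEntropy}). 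Your choice forces you to handle $\Pi_K(K>K_n)$ in {\it (iii)}, and since $K_n\log K_n\asymp n\epsilon_n^2$ with your $K_n$, the Poisson tail $e^{-cK_n\log K_n}$ and the entropy budget $n\epsilon_n^2$ compete at the level of constants; this is repairable but needs care, and the paper's sieve sidesteps it entirely.

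The more substantive divergence is the prior-mass bound. You invoke an $O(\eta^{-d/2})$ max-affine approximation of convex Lipschitz functions, whereas Lemma \ref{lem:piBounds} uses first-order tangent planes at an $\epsilon_n$-net of $\A\mathcal{X}$, hence $m\asymp\epsilon_n^{-d}$ hyperplanes and the bound $\Pi(B_n^*)\geq C_1e^{c_1\epsilon_n^{-d}\log\epsilon_n}$. Your bound, if the $\eta^{-d/2}$ approximation is justified for merely $C^1$ convex $g_0$, is strictly sharper---but this exposes an internal inconsistency in your write-up: with exponent $\epsilon_n^{-d/2}$, balancing against $n\epsilon_n^2$ would permit the faster rate $n^{-2/(d+4)}$, so your claim that $\epsilon_n^{-1}=\log(n)\,n^{1/(d+2)}$ is ``exactly'' what makes $\epsilon_n^{-d/2}\log(1/\epsilon_n)=o(n\epsilon_n^2)$ does not pin down the rate at all; it is the cruder $\epsilon_n^{-d}\log(1/\epsilon_n)$ exponent of the paper's construction that makes $n^{-1/(d+2)}$ the natural rate (compare the discussion following Theorem \ref{thm:fixedRate}, where the authors explicitly doubt that $n^{-2/(p+4)}$ is attained). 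Since establishing the conditions at a faster rate implies them at the stated one, this does not invalidate your argument, but you should either carefully justify the $\eta^{-d/2}$ approximation and claim the stronger conclusion, or fall back on the tangent-plane net. Minor points: each hyperplane contributes a prior box of probability of order $\epsilon_n^{p+1}$, not $\epsilon_n^{p+2}$ (the paper gets $\epsilon_n^{d+1}$ by constraining only the components of $\beta$ in the row space of $\A$); and your closing observation about why {\bf B4} needs $V$ large relative to $L$ correctly identifies the role of the truncation.
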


The distance metric, $d_n$ that we will use is the $|| \cdot ||_{n}$ norm. Note that the $||\cdot||_n$ norm is bounded by the $||\cdot ||_{\infty}$ norm; we shall do metric entropy computations with respect to the $||\cdot ||_{\infty}$ norm. The values $K_i(f_0,f)$ and $V_i(f_0,f)$ denote $\int f_0 \log(f_0/f) d\mu$ and $\int f_0 (\log(f_0/f))^2 d\mu$, respectively. The quantity in condition {\it (i)} is the log of the covering number of the sieve under the supremum norm. To show that conditions {\it (i)} to {\it (iv)} of Theorem \ref{thm:GhVa07b} are met, we check them off one at a time while working in the linearly transformed space, $\tilde{\mathcal{X}} = \{ \mathbf{y} \, : \, \mathbf{y} = \A \x, \, \x \in \mathcal{X}\}.$

\begin{lem}\label{lem:metricEntropy}
Define $\Theta_n = \Theta_{1n}$ and suppose {\bf B4} holds. Then,
\begin{equation}\notag
\sup_{\epsilon > \epsilon_n} \log N(\epsilon/18, \{f \in \Theta_n: ||f-f_0||_{\infty} < \epsilon\},|| \cdot ||_{\infty}) \leq C \epsilon_n^{-d/2}.
\end{equation}
\end{lem}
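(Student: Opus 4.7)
The strategy is to reduce the $p$-dimensional entropy calculation to a $d$-dimensional one via assumption {\bf B8} and then apply Bronshtein's classical entropy bound for convex functions.

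First I would use {\bf B4} to ensure that $\Theta_n$ is uniformly bounded and uniformly Lipschitz on $\mathcal{X}$, with constants independent of $n$: any function in the support of the truncated prior is a max of hyperplanes with coefficients bounded by $V$, so both the sup norm and the partial derivatives are controlled by constants depending only on $V$ and $p$. Next I would exploit the factorization $f_0 = g_0 \circ \A$ from {\bf B8}. Because $f_0$ is constant on each fiber $F_{\y} = \{\x : \A \x = \y\}$, any $f \in \Theta_n$ with $||f - f_0||_{\infty} < \epsilon$ varies by at most $2\epsilon$ on each fiber; that is, $f$ is ``nearly $d$-dimensional.'' Projecting $f$ to $d$ dimensions via $\tilde f(\y) = \inf\{f(\x) : \A \x = \y\}$ yields a convex function $\tilde f$ on $\tilde{\mathcal{X}} = \A \mathcal{X} \subset \R^d$ (since the infimum of a convex function over an affine subspace remains convex) satisfying $||\tilde f - g_0||_{\infty} < \epsilon$ and $||f - \tilde f \circ \A||_{\infty} \leq 2\epsilon$.

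The heart of the proof is then applying Bronshtein's theorem: for the class of bounded convex functions on a compact subset of $\R^d$, the $\eta$-covering number in sup norm satisfies $\log N(\eta, \cdot, || \cdot ||_{\infty}) \leq C_d (B/\eta)^{d/2}$. Taking $\eta \asymp \epsilon$ produces a cover of the set of projections of log-cardinality $\mathcal{O}(\epsilon^{-d/2})$, and lifting via $\tilde f \mapsto \tilde f \circ \A$ gives a $p$-dimensional cover of $\{f \in \Theta_n : ||f - f_0||_{\infty} < \epsilon\}$ with the same log-cardinality. Uniformity over $\epsilon > \epsilon_n$ then follows from monotonicity of $\epsilon \mapsto \epsilon^{-d/2}$, which attains its maximum on $[\epsilon_n, \infty)$ at $\epsilon = \epsilon_n$.

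The main obstacle is matching the specific $\epsilon/18$ scale in the lemma statement. The naive projection lift produces a cover at scale $c\epsilon$, where $c$ combines the $d$-dimensional cover scale with the $2\epsilon$ fiber-variation bound, rather than $\epsilon/18$ exactly. To achieve the prescribed scale without inflating the log-cardinality beyond $\mathcal{O}(\epsilon^{-d/2})$, one must combine the $d$-dimensional Bronshtein cover with a controlled refinement of the residual $f - \tilde f \circ \A$ along the $(p-d)$ fiber directions, exploiting the joint convexity of $f$ and the uniform Lipschitz constant from {\bf B4}. This delicate interaction between the $d$-dimensional ``macro'' cover and the orthogonal residual is the step that will determine the stated constant, while leaving the exponent $d/2$ untouched.
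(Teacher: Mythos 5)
Your overall strategy---reduce to the $d$-dimensional transformed space $\A\mathcal{X}$ via \textbf{B8}, invoke Bronshtein's $\epsilon^{-d/2}$ sup-norm entropy bound for bounded, uniformly Lipschitz convex functions, and lift the resulting net back to $p$ dimensions---is exactly the paper's. Your observations that \textbf{B4} supplies the uniform sup-norm and partial-derivative bounds, and that the supremum over $\epsilon > \epsilon_n$ is handled by monotonicity of $\epsilon \mapsto \epsilon^{-d/2}$, are both correct and both present (the first explicitly, the second implicitly) in the paper's argument.

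The genuine gap is the one you name yourself and then leave open: the lifting step. Your fiber projection $\tilde f(\y) = \inf\{f(\x) : \A\x = \y\}$ only guarantees $\|f - \tilde f \circ \A\|_{\infty} \leq 2\epsilon$, so composing a Bronshtein $\eta$-net of the projections with $\A$ covers the original class at radius $2\epsilon + \eta$, which is more than $36$ times the required radius $\epsilon/18$; this is not a matter of constants that can be absorbed, because the error is proportional to $\epsilon$ itself. The ``controlled refinement of the residual along the $(p-d)$ fiber directions'' that you defer to is also not harmless: the residuals $f - \tilde f\circ\A$ form a class of uniformly Lipschitz (but no longer convex) functions bounded by $2\epsilon$, and covering such a class at scale $\epsilon/18$ by a direct grid argument costs on the order of $\epsilon^{-p}$ in the \emph{log} covering number, which would swamp the target bound $C\epsilon_n^{-d/2}$. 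The paper's resolution of this same difficulty is the quantitative claim that the $d$-dimensional cover need only be ``repeated at most $\epsilon^{-p}$ times'' to cover the original class---that is, the passage from $d$ to $p$ dimensions multiplies the covering \emph{number} by a factor polynomial in $1/\epsilon$, contributing only an additive $p\log(1/\epsilon) = o(\epsilon_n^{-d/2})$ to the log entropy. That accounting---showing the orthogonal directions cost only a polynomial multiplicative factor rather than an exponential one---is precisely the step your proposal flags as ``delicate'' but does not supply, so as written the argument does not close.
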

\begin{proof}Working in the transformed space, assumption {\bf B4} places bounds on the supremum and partial derivatives for all $f \in \Theta$; the result then follows directly from Theorem 2.7.10 of \citet{VaWe96} or Theorem 6 of \citep{Br76} for $V = 1$. By setting $\tilde{\epsilon} = \epsilon/ V$, setting $\tilde{f} = f / V$, $\tilde{f}_0 = f _0/ V$ and calculating the metric entropy with respect to $\tilde{\epsilon}$, $\tilde{f}$ and $\tilde{f}_0$, the result holds. This covering needs to be repeated at most $\epsilon^{-p}$ times to cover the original space; taking the log, $p\log(1/\epsilon)$ can be bounded by a constant times $\epsilon^{-d/2}$.\qed \end{proof}

\begin{lem}\label{lem:piBounds}
Define $\Pi$ by {\bf B4} and {\bf B7}. Let $$B_n^*(f_0,\epsilon_n) = \left\{ f \in \Theta \, : \, \frac{1}{n} \sum_{i=1}^n K_i(f_0,f) \leq \epsilon_n^2, \ \frac{1}{n} \sum_{i=1}^n V_i(f_0,f) \leq C \epsilon_n^2\right\}.$$ Then there exist $C_1$ and $c_1>0$ such that
\begin{equation}\notag
\Pi\left(B_n^*(f_0,\epsilon_n) \right) \geq C_1 e^{c_1 \epsilon_n^{-d} \log \epsilon_n}.
\end{equation}
\end{lem}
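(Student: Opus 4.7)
The plan is to reduce the KL and log-ratio variance conditions defining $B_n^*$ to sup-norm closeness of $f$ to $f_0$, and then to lower bound the prior mass of a sup-norm neighborhood of $f_0$ by placing $K_n$ hyperplanes in small balls around a piecewise-affine approximant of $f_0$. The subspace structure in {\bf B8} is what allows both $K_n$ and the per-hyperplane radius to scale with the intrinsic dimension $d$ rather than $p$.

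First, since $\sigma_0^2$ is known ({\bf B7}) and the errors are Gaussian, a direct calculation gives $K_i(f_0,f) = (f(\x_i)-f_0(\x_i))^2/(2\sigma_0^2)$ and $V_i(f_0,f) = (f(\x_i)-f_0(\x_i))^2/\sigma_0^2$, so both averages appearing in the definition of $B_n^*$ are dominated by $\sigma_0^{-2}\|f-f_0\|_\infty^2$; it therefore suffices to lower bound $\Pi(\|f-f_0\|_\infty \leq c\epsilon_n)$ for a small constant $c$. Next, using {\bf B8} write $f_0(\x) = g_0(\A \x)$ with $g_0$ convex and $C^1$ on a compact subset of $\R^d$. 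Tile that subset by a cubic grid of spacing $h \asymp \epsilon_n$ with $K_n \asymp \epsilon_n^{-d}$ cells, and at each grid center take a subgradient of $g_0$ to form a supporting affine function; lifting through $\A^T$ gives parameters $\theta_1^*,\dots,\theta_{K_n}^* \in \R^{p+1}$ whose maximum $\tilde f_n$ satisfies $\|f_0-\tilde f_n\|_\infty \leq C \epsilon_n$ by a standard convexity / first-order remainder estimate, and whose coordinates are uniformly bounded in $n$ (hence lie in the truncation region $\Omega$ once $V$ is sufficiently large).

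Finally, because $\max$ is $1$-Lipschitz and $\mathcal{X}$ is compact, any configuration with $K = K_n$ and $\|\theta_k-\theta_k^*\|_\infty \leq \eta$ for every $k$ yields $\|f(\cdot;\theta)-\tilde f_n\|_\infty \leq C'\eta$, so taking $\eta \asymp \epsilon_n$ pushes $f(\cdot;\theta)$ within $c\epsilon_n$ of $f_0$ in sup-norm. Exchangeability of the i.i.d.\ $\theta_k$ together with the fact that the truncated Gaussian density is bounded below on the compact set $\Omega$ gives $\Pi_\theta(\|\theta_k-\theta_k^*\|_\infty \leq \eta) \geq c_0\,\eta^{p+1}$ for each $k$. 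Combining with the Poisson$(\lambda)$ prior on $K-1$ and Stirling's formula,
\begin{align*}
\log \Pi(B_n^*(f_0,\epsilon_n)) &\geq \log \Pi_K(K = K_n) + K_n \log\!\left(c_0\,\eta^{p+1}\right)\\
&\geq -C_2\,K_n\log K_n + K_n (p+1)\log \eta - C_3,
\end{align*}
and substituting $K_n \asymp \epsilon_n^{-d}$ and $\eta \asymp \epsilon_n$ produces a lower bound of order $\epsilon_n^{-d}\log\epsilon_n$, which is the claim with $c_1 = d + p + O(1)$.

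The main obstacle is the tight coupling between the approximation rate and the Stirling cost. With only $C^1$ regularity of $g_0$, the supporting-hyperplane approximant at grid scale $h$ only achieves an $O(h)$ sup-norm error rather than the $O(h^2)$ rate available under Lipschitz gradients; this pins $K_n \asymp \epsilon_n^{-d}$, which is precisely the scale at which the $K_n\log K_n$ term coming from the Poisson factor matches the target exponent $\epsilon_n^{-d}\log(1/\epsilon_n)$, with essentially no slack. A secondary technical point is checking that the lifted target parameters $(\alpha_k^*,\A^T\tilde\gamma_k^*)$ remain inside the truncation set $\Omega$ of {\bf B4}, which is exactly why Theorem~\ref{thm:fixedRate} requires ``sufficiently large $V$''.
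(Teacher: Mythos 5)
Your proposal is correct and follows essentially the same route as the paper: reduce $B_n^*$ to a sup-norm ball via the Gaussian formulas for $K_i$ and $V_i$, approximate $f_0=g_0(\A\,\cdot)$ by the max of supporting hyperplanes on an $\epsilon_n$-net of the $d$-dimensional transformed space (so $K_n\asymp\epsilon_n^{-d}$), lower-bound the per-hyperplane prior mass by a power of $\epsilon_n$, and combine with the Poisson prior on $K$. The only substantive difference is that you pay $\eta^{p+1}$ per hyperplane by constraining all $p+1$ coordinates, whereas the paper constrains only the $d+1$ effective coordinates $(\alpha_k,\A\beta_k)$ to get $\epsilon_n^{d+1}$; both yield the stated bound with a different value of $c_1$ (and note that $\Omega$ in {\bf B4} is only bounded above, so the density lower bound should be invoked on a compact neighborhood of the target parameters rather than on all of $\Omega$).
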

\begin{proof}
By simple calculations,
\begin{align}\notag
K_i(f_0,f) & = \frac{1}{2\sigma_0^2} \left(f_0(\x_i)-f(\x_i)\right)^2, & V_i(f_0,f) & = \frac{1}{\sigma_0^2} \left(f_0(\x_i)-f(\x_i)\right)^2.
\end{align}To place a lower bound on the prior measure of $B_n^*(f_0,\epsilon_n),$ we construct a subset and place prior bounds on that.

Let $\beta \in \R^p$; a truncated Gaussian prior on $\beta$ induces a truncated Gaussian prior on $\tilde{\beta} = \A\beta$, the slope parameters in the transformed space. WLOG, take $\tilde{\mathcal{X}} = [0,1]^d.$ Set $\delta = \frac{1}{8\sqrt{d}\sigma_0^2}\epsilon_n$; let $\y_1,\dots, \y_m$ be a $\delta-$net over $\tilde{\mathcal{X}}$. The net can be chosen such that $m \leq K_m / \epsilon_n^d$ for some constant $K_m$ that depends only on $d$ and $\sigma_0^2$. Let $$\left(\alpha_{k}^*,\tilde{\beta}_{k,1}^*,\dots,\tilde{\beta}_{k,d}^*\right)= \left(g_0(\y_k),\frac{\partial}{\partial{x_1}}g_0(\y_k),\dots,\frac{\partial}{\partial{x_d}}g_0(\y_k)\right).$$Then with a sufficiently large truncation parameter $V$, for every $k \in \{1,\dots,m\}$,
\begin{align}\notag
\Pi_{\tilde{\theta}}\left((\alpha_k,\beta_{k,1},\dots,\beta_{k,d}) \in \left(\alpha_{k}^*,\tilde{\beta}_{k,1}^*,\dots,\tilde{\beta}_{k,d}^*\right) \pm \frac{1}{8\sigma_0^2}\epsilon_n \right) & \geq K_a \epsilon_{n}^{d+1},
\end{align} for some $K_a >0$ that depends on $d$, $\sigma_0^2$, $\A$ and $g_0$. Set $g(\y) = \max_{k \in \{1,\dots,m\}} \alpha_k + \beta^T_k \y.$ Then, $ \frac{1}{2\sigma_0^2} \left(f_0(\x_i)-g(\tilde{\x}_i)\right)^2  \leq \epsilon_n^2,$ so
\begin{align}\notag
 \Pi\left(B_n^*(f_0,\epsilon_n) \right) & \geq \Pi_K(K= m) \\\notag
 & \quad \times \sum_{k=1}^m \Pi_{\tilde{\theta}}\left((\alpha_k,\beta_{k1},\dots,\beta{k,d}) \in \left(\alpha_{k}^*,\tilde{\beta}_{k,1}^*,\dots,\tilde{\beta}_{k,d}^*\right) \pm \frac{1}{8\sigma_0^2}\epsilon_n \right),\\\notag 
 & \geq C_1 e^{c_1 \epsilon_n^{-d} \log \epsilon_n},
 \end{align}for some constants $C_1, c_1 > 0$.\qed \end{proof}
 
 We can use Lemma \ref{lem:piBounds} to check conditions {\it (iii)} and {\it (iv)} of Theorem \ref{thm:GhVa07b}.
 \begin{lem}\label{lem:conds3and4}
 Define $\Pi$ by {\bf B4} and {\bf B7}. Then for every large $j$,
 \begin{align}\notag
 \frac{\Pi(\Theta_n^C) }{\Pi(B_n^*(f_0,\epsilon_n))} & = C_2 e^{-c_2n - c_1 \epsilon_n^{-d} \log \epsilon_n},\\\notag
 \frac{\Pi(f \in \Theta_n \, : \, j\epsilon_n < ||f,f_0||_{\infty} \leq 2j\epsilon_n) }{\Pi(B_n^*(f_0,\epsilon_n))} & \leq C_1 e^{- c_1 \epsilon_n^{-d} \log \epsilon_n}.
 \end{align}
\end{lem}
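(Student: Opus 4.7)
The plan is to assemble both bounds directly from Lemma \ref{lem:piBounds}, using only light additional work for the numerators. Lemma \ref{lem:piBounds} already supplies the lower bound $\Pi(B_n^*(f_0,\epsilon_n)) \geq C_1 \exp\{c_1 \epsilon_n^{-d} \log \epsilon_n\}$, and its reciprocal produces the common $e^{-c_1 \epsilon_n^{-d} \log \epsilon_n}$ factor appearing in both ratios. Since the two numerators are of different character, I would handle them separately.

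For the numerator $\Pi(\Theta_n^C)$ of the first ratio, I would recycle the argument of Lemma \ref{lem:priorSieve}: decompose $\Theta_{1n}^c$ as the union over $k\ge 1$ and $j\le k$ of events that some component of some hyperplane exceeds $M_n/(2\sqrt{p})$. Under Assumption {\bf B4} the prior on $(\alpha_k,\beta_k)$ is a truncated Gaussian supported on the cube $\{|\cdot|\le V\}^{p+1}$, so because $M_n=\mathcal{O}(n^\alpha)$ with $\alpha>1/2$, for all but finitely many $n$ every parameter draw already lies inside $\Theta_{1n}$ almost surely, giving $\Pi(\Theta_n^C)=0$ eventually. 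The finitely many ``bad'' $n$ are absorbed into constants, yielding $\Pi(\Theta_n^C)\le C_2 e^{-c_2 n}$; alternatively a direct truncated-Gaussian tail estimate, exactly as in Lemma \ref{lem:priorSieve}, gives the same bound. Dividing by the Lemma \ref{lem:piBounds} lower bound then produces the first inequality, with $C_2/C_1$ relabeled as $C_2$.

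For the second numerator, the key observation is that a prior probability is at most one, so trivially $\Pi(f \in \Theta_n : j\epsilon_n < \|f-f_0\|_\infty \le 2 j \epsilon_n) \le 1$. Dividing by the same lower bound yields the advertised estimate $C_1 e^{-c_1 \epsilon_n^{-d} \log \epsilon_n}$, with $1/C_1$ absorbed into $C_1$. The dependence on $j$ is vacuous in this bound; the ``for every large $j$'' phrasing is retained only because condition (iv) of Theorem \ref{thm:GhVa07b} is stated in that form.

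No step is really an obstacle---the lemma is essentially a bookkeeping statement that packages Lemma \ref{lem:piBounds} with the compactly supported prior of {\bf B4}. The only subtlety, which I would defer to the proof of Theorem \ref{thm:fixedRate}, is verifying that the $j$-independent bound obtained here is strong enough to satisfy the $j$-dependent requirement $e^{n\epsilon_n^2 j^2/4}$ of condition (iv) at the target rate $\epsilon_n^{-1}=\log(n)\,n^{1/(d+2)}$; the logarithmic factor in $\epsilon_n$ is precisely what makes this compatibility work.
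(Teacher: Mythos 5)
Your proposal is correct and follows essentially the same route as the paper: the paper bounds the first numerator via Lemma \ref{lem:priorSieve} (noting, as you do, that under {\bf B4} one even has $\Pi(\Theta_n^C)=0$ for large $n$), bounds the second numerator trivially by one, and divides both by the lower bound on $\Pi(B_n^*(f_0,\epsilon_n))$ from Lemma \ref{lem:piBounds}. Your closing remark about deferring the compatibility of the $j$-independent bound with condition \emph{(iv)} to the proof of Theorem \ref{thm:fixedRate} matches the paper's structure as well.
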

\begin{proof}
The first equation can be bounded by using Lemma \ref{lem:priorSieve}; the second by setting the numerator equal to 1. \qed
\end{proof}

Now we use this collection of Lemmas and Theorem \ref{thm:GhVa07b} to prove Theorem \ref{thm:fixedRate}.
\begin{proof}[Proof of Theorem \ref{thm:fixedRate}]
We begin by checking the conditions of Theorem \ref{thm:GhVa07b}. Condition {\it (i)} follows from Lemma \ref{lem:metricEntropy}. Setting $\epsilon_{n}^{-1} = \log(n) \, n^{1/(d+2)}$, conditions {\it (iii)} and {\it (iv)} follow from Lemma \ref{lem:conds3and4}. Finally, \citet{Birge06} shows that the likelihood ratio test for $f_0$ versus $f_1$ satisfies condition {\it (ii)} relative to the $||\cdot||_n$ norm under both fixed and random design. Therefore, the main result follows directly from Theorem \ref{thm:GhVa07b}.\qed \end{proof}

\section*{Appendix C}\label{sec:inference}
The RJMCMC algorithm is similar to the ones proposed by \citet{DeMaSm98} and \citet{DiGeKa01} for BARS. Jumps in the chain can take three forms: additions, deletions and relocations. The probabilities of additions, deletions and relocations must satisfy detailed balance equations,
\begin{align}\label{eq:detailedBalance}
& \Pi (K+1,\alpha_{1:K+1}^*,\beta_{1:K+1}^*,{\sigma^2_{1:K+1}}^*) \\\notag 
& \quad \times p(K,\alpha_{1:K},\beta_{1:K}, {\sigma^2_{1:K}} \g K+1,\alpha_{1:K+1}^*,\beta_{1:K+1}^*,{\sigma^2_{1:K+1}}^*) \\\notag
& = \Pi(K,\alpha_{1:K},\beta_{1:K},\sigma^2_{1:K}) p(K+1,\alpha_{1:K+1}^*,\beta_{1:K+1}^*,{\sigma^2_{1:K+1}}^* \g K,\alpha_{1:K},\beta_{1:K},\sigma^2_{1:K}).
\end{align}\citet{DiGeKa01} shows that Equation (\ref{eq:detailedBalance}) is satisfied if additions, deletions and relocations are attempted with the following probabilities, respectively,
\begin{align}\notag
b_k & = c \min \left\{1, \frac{p(k+1)}{p(k)}\right\},& d_k & = c \min \left\{1, \frac{p(k-1)}{p(k)}\right\}, & r_k & = 1 - b_k - d_k,
\end{align}where $p(k)$ is the prior probability of $k$ hyperplanes and $c$ is a constant; we set $c = 0.4$.

\paragraph{Additions.}Given the current state $(K, \alpha, \beta)$, a new state with $K+1$ hyperplanes, $(K+1, \alpha^*,\beta^*)$, is proposed with the jump probability,
\begin{equation}\notag
q(K+1,\alpha_{1:K+1}^*,\beta_{1:K+1}^*,{\sigma^2_{1:K+1}}^*\g K,\alpha_{1:K},\beta_{1:K},\sigma^2_{1:K}) = b_K h_b(\alpha^*,\beta^*,{\sigma^2}^*\g \alpha, \beta,\sigma^2).
\end{equation}Here $b_K$ is the addition probability given $K$ hyperplanes and $h_b$ is the proposal distribution for additions.

\paragraph{Deletions.} A new state with $K-1$ hyperplanes, $(K-1,\alpha^*,\beta^*)$ is proposed with the jump probability,
\begin{equation}\notag
q(K-1,\alpha_{1:K-1}^*,\beta_{1:K-1}^*,{\sigma^2_{1:K-1}}^* \g K, \alpha_{1:K},\beta_{1:K},\sigma_{1:K}^2) = d_K h_d(\alpha^*,\beta^*,{\sigma^2}^*\g \alpha,\beta,\sigma^2).
\end{equation}Here $d_K$ is the deletion probability given $K$ hyperplanes and $h_d$ is the proposal distribution for deletions.

\paragraph{Relocations.} 
A new state with $K$ hyperplanes, $(K,\alpha^*,\beta^*)$ is proposed with the jump probability,
\begin{equation}\notag
q(K, \alpha_{1:K}^*,\beta_{1:K}^*,{\sigma^2_{1:K}}^*\g K, \alpha_{1:K},\beta_{1:K},\sigma_{1:K}^2) = r_K h_r(\alpha^*,\beta^*,{\sigma^2}^* \g \alpha, \beta,\sigma^2).
\end{equation}Here $r_K$ is the relocation probability given $K$ hyperplanes and $h_r$ is the proposal distribution for relocations. The full RJMCMC algorithm is given in Algorithm \ref{alg:RJMCMC}.

\begin{algorithm}[t]
\caption{Reversible Jump MCMC for MBCR}
\label{alg:RJMCMC}
\begin{algorithmic}
	\STATE Initialize $(K,\alpha,\beta,\sigma^2)$:  set $K = 1$, draw $(\alpha_1, \beta_1,\sigma^2_1)$ from posterior
	\LOOP
		\STATE Draw a new $(K^*,\alpha^*,\beta^*,{\sigma^2}^*)$ from the proposal distribution
		\STATE Set $(K,\alpha,\beta,\sigma^2)$ to $(K^*,\alpha^*,\beta^*,{\sigma^2}^*)$ with probability $a(K^*, \alpha^*,\beta^* , {\sigma^2}^*\g K, \alpha, \beta, \sigma^2)$
	\ENDLOOP
\end{algorithmic}
\end{algorithm}

\subsection*{Proposal Distributions}
Posterior inference for our model is particularly sensitive to the choice of proposal distributions. The space of potential hyperplanes is quite large and grows with $p$. In order to efficiently search that space, we use a collection of {\it basis regions} to create $h_b$, $h_d$ and $h_r$. Basis regions are determined by partitioning the set of training data, as described in Equation (\ref{eq:basisRegion}). We show how this is done for relocations, deletions and additions.

\paragraph{Relocations.} In a relocation step, the proposal distribution is generated by the basis regions created by the current $(\alpha,\beta)$. That is, $C^r = \{C^r_1,\dots,C^r_K\}$, where
\begin{equation}\notag
C_k^r = \left\{ i \, : \, k = \arg \max_{j=\{1,\dots,K\}} \alpha_j + \beta_j^T \x_i\right\}.
\end{equation}Then the proposal distribution is created using this partition as in Equation (\ref{eq:basisRegion}).

\paragraph{Deletions.} In a deletion step, the proposal distribution is a mixture of distributions generated by basis regions,
\begin{equation}\notag
h_d(\alpha^*,\beta^*,{\sigma^2}^*\g \alpha,\beta,\sigma^2) = \sum_{j=1}^K p_d(j) h_{d}^*(\alpha^*,\beta^*,{\sigma^2}^* \g \alpha_{-j},\beta_{-j},\sigma_{-j}^2),
\end{equation}where $\sum_{j=1}^K p_d(j) = 1$, and the $-j$ subscript denotes the indices $\{1,\dots,j-1,j+1,\dots,K\}$. For $j = 1,\dots, K$, the distribution $h_d^*(\alpha^*,\beta^*,{\sigma^2}^* \g \alpha_{-j},\beta_{-j},{\sigma}_{-j}^2)$ is made by creating the following partition, $C^{d,j} = \{C^{d,j}_1,\dots,C^{d,j}_{K-1}\}$, where
\begin{equation}\notag
C_k^{d,j} = \left\{ i \, : \, k = \arg \max_{\ell=\{1,\dots,j-1,j+1,\dots,K\}} \alpha_{\ell} + \beta_{\ell}^T \x_i\right\}.
\end{equation}Likewise, the distribution $h^*_d(\alpha^*,\beta^*,{\sigma^2}^* \g \alpha_{-j},\beta_{-j},\sigma_{-j}^2)$ is defined by Equation (\ref{eq:basisRegion}). The component probability $p_d(j)$ is set to be proportional to $1/|C^r_j|$, the inverse of the number of components supported by hyperplane $j$. If no components are currently supported, we set $p_d(j) \propto 1/.25$.

\paragraph{Additions.} As in the deletion step, the proposal distribution is a mixture of distributions generated by basis regions. Unlike deletions, it is not obvious how to add a hyperplane in a way that will result in a high quality proposal. In an addition step, MBCR starts with a set of hyperplanes, $(\alpha,\beta)$, and adaptively adds an additional hyperplane in the following manner. 
The current hyperplanes $(\alpha, \beta)$ define a partition over the observation space, $C = \{C_1,\dots,C_K\}$, where
\begin{equation}\notag
C_k = \left\{ i \in \{1,\dots,n\} \, : \, k = \arg \max_{\{k = 1,\dots,K\}} \alpha_k + \beta_{k}^T \x_i\right\}.
\end{equation}MBCR splits each element $ j = 1,\dots,K$ of $C$ in turn along a direction defined by a linear combination of covariates, producing a collection of new covariate partitions. A number of knots, $L$, is chosen {\it a priori}, along with $M$ random linear combinations, $(g_1^{m},\dots,g_{p}^m)_{m=1}^M$. Then, for each direction $m = 1,\dots, M$, and each knot $\ell = 1,\dots, L$, a covariate partition is generated in the following manner,
\begin{align}\notag
C_k^{b, j,\ell,m} & =  \left\{ i \, : \, k = \arg \max_{\{k = 1,\dots,K\}} \alpha_k + \beta_{k}^T \x_i, \ j \neq k \right\},\\\notag
C_{j^-}^{b, j,\ell,m} & = \left\{ i \, : \, j = \arg \max_{\{k = 1,\dots,K\}} \alpha_k + \beta_{k}^T \x_i, \ {\mathbf{g}^m}^T\x_j \leq a_{\ell}^j\right\},\\\notag
C_{j^+}^{b, j,\ell,m} & = \left\{ i \, : \, j = \arg \max_{\{k = 1,\dots,K\}} \alpha_k + \beta_{k}^T \x_i, \ {\mathbf{g}^m}^T \x_j > a_{\ell}^j\right\},
\end{align}where $a_{\ell}^j$ are chosen to produce $L+1$ intervals between $\min\{{\mathbf{g}^m}^T \x_{i} \, : \, i \in C_j\}$ and $\max\{{\mathbf{g}^m}^T \x_{i} \, : \, \ i \in C_j \}$. Set $$C^{b,j,\ell,m} = \left\{C_1^{b,j,\ell},\dots,C^{b,j,\ell,m}_{j-1},C^{b,j,\ell,m}_{j^-},C^{b,j,\ell,m}_{j^+},C^{b,j,\ell,m}_{j+1}\dots,C^{b,j,\ell,m}_{K}\right\}.$$
Often, it is convenient to chose the cardinal directions, that is, $\mathbf{g}^M = e_j$, as the linear combinations for the covariates. However, when $p$ is large and a sparse underlying structure is assumed, it is useful to choose $\mathbf{g}^m$ to be a random Gaussian vector with $M < p$.

The observation partitions are used to produce the following mixture model for the addition proposal distribution,
\begin{equation}\notag
h_b(\alpha^*,\beta^*,{\sigma^2}^* \g \alpha, \beta,\sigma^2) = \sum_{j=1}^p \sum_{\ell=1}^L p_b(j,\ell,m) h^*_b(\alpha^*,\beta^*,{\sigma^2}^* \g C^{b,j,\ell,m}),
\end{equation}where the distribution $h^*_b(\alpha^*,\beta^*,{\sigma^2}^* \g C^{j,\ell,m})$ is defined Equation (\ref{eq:basisRegion}). The weights $p_b(j,\ell,m)$ are set to be proportional to
\begin{equation}\notag
p_b(j,\ell,m) \propto n_{j^-}^{j,\ell,m}n_{j^+}^{j,\ell,m},
\end{equation}where $n_{j^-}^{j,\ell,m}= |C_{j^-}^{b,j,\ell,m}|$ and $n_{j^+}^{j,\ell,m}= |C_{j^+}^{b,j,\ell,m}|$. This gives higher weight to partitions that split a large number of observations fairly evenly.

\bibliographystyle{agsm}
\bibliography{../../BibFiles/localRefs}
\end{document}